\documentclass[runningheads]{llncs}
\usepackage[T1]{fontenc}
\usepackage{graphicx}
\usepackage{amsmath}
\usepackage{amsfonts}
\usepackage{amssymb}
\usepackage{mathabx}
\usepackage{bussproofs}
\usepackage{lineno}
\usepackage{enumerate}
\usepackage{enumitem}

\usepackage{stmaryrd}
\usepackage{euscript} 
\usepackage{latexsym}
\usepackage{mathrsfs}
\usepackage[all]{xy}
\usepackage{dsfont}

\usepackage{url}
\usepackage{enumitem}
\usepackage{xcolor}
\usepackage{colortbl}
\usepackage{colonequals}
\usepackage[all]{xy}
\usepackage{tikz}
\usetikzlibrary{positioning,arrows,calc}

\usepackage[sort]{cite}
\usepackage{bookmark}

\usepackage[utf8]{inputenc} 

\usepackage{booktabs}       
\usepackage{amsfonts}       
\usepackage{amsmath}
\usepackage{amssymb}
\usepackage{mathrsfs}
\usepackage{tabularx}
\usepackage{tikz-cd}
\usepackage{tikz}
\usepackage{caption}
\usetikzlibrary{matrix}
\usepackage{nicefrac}     
\usepackage{multirow} 
\usepackage{float}
\usepackage{microtype}      
\usepackage{doi}
\usepackage[nottoc]{tocbibind}
\usepackage{chngcntr}
\usepackage{enumerate}
\usepackage{blindtext}
\usepackage{stmaryrd }
\usepackage{nomencl}
\usepackage{etoolbox}
\usepackage[draft]{fixme}

\usepackage{mathrsfs}
\usepackage{tikz,pgfheaps}
\usetikzlibrary{shapes,arrows,positioning,decorations,calc,snakes}
\usepackage{ctable}
\usepackage{etex}

\usepackage{tabularx}
\usepackage{booktabs}
\usepackage{caption}
\usepackage{graphicx}
\usepackage{tikz-cd}
\usepackage{xcolor}
\usepackage{tikz}
\usetikzlibrary{positioning,arrows,calc}
\usepackage{bussproofs}
\usepackage{amsmath}
\usepackage{amssymb}
\usepackage{mathrsfs}
\usepackage{tabularx}

\usepackage{caption}
\usetikzlibrary{matrix}
\usepackage{mathrsfs,proof}
\usetikzlibrary{shapes,arrows,positioning,decorations,calc,snakes}
\usepackage{ctable}
\usepackage{bussproofs,
}

\usetikzlibrary{positioning,arrows,calc}

\newcommand\ndel{{\mathop{\neg\triangle}}}
\newcommand\del{{\mathop{\triangle}}}
\newcommand\imm{\ensuremath}
\newcommand\limp{\to}

\newcommand\axdi{{\ensuremath{\axdelta1}}}
\newcommand\axdii{{\ensuremath{\axdelta2}}}
\newcommand\axdiii{{\ensuremath{\axdelta3}}}
\newcommand\axdiv{{\ensuremath{\axdelta4}}}
\newcommand\axdv{{\ensuremath{\axdelta5}}}

\newcommand\axdr{{\ensuremath{\axdelta6}}}

\newcommand\Proves\Rightarrow
\newcommand\nProves\nRightarrow

\newcommand\I{\imm{\mathcal I}}

\newcommand\lOr{\bigvee}
\newcommand\lequiv{\leftrightarrow}

\newcommand\axdelta{\mathnormal{\mathrm{\Delta}}}
\newcommand\axdvi{{\ensuremath{\axdelta6}}}

\newcommand\proves{\mathrel{\vdash}}

\usepackage{xcolor}

\begin{document}
\title{Gödel Logics: On the Elimination of The Absoluteness Operator}
%
%
\author{Matthias Baaz\inst{1}\orcidID{0000-0002-7815-2501}\thanks{This research was funded in part by the Austrian Science Fund (FWF) 10.55776/P36571.} \and
Mariami Gamsakhurdia\inst{1}\orcidID{0009-0009-0468-9945} }
\authorrunning{M. Baaz et al.}
%
\institute{Institute of Discrete Mathematics and Geometry, TU Wien, Austria \\\email{\{baaz,mariami\}@logic.at}
}
\maketitle              
\begin{abstract}
 We investigate the eliminability of the absoluteness operator $\del$ in Gödel logics. While $\del$ is not definable from the standard connectives and disrupts important proof-theoretic properties, we show that it becomes eliminable at the propositional level under a restricted semantics in which all propositional atoms (except the truth constant $\top$) are interpreted strictly below 1.
Under this semantics, every formula containing $\del$ is equivalent to a disjunction of chain formulas, yielding a $\del$-free normal form (standard and restricted semantics coincide w.r.t. valid formulas without $\del$). 
We further analyze the situation in the first-order setting, where $\del$-elimination fails in general due to recursion-theoretic and topological constraints, but can be recovered under witnessed semantics.

\keywords{Proof Theory \and Gödel Logics \and Prenex Normal Forms \and Skolemization \and Satisfiability \and Validity \and Delta Operator}
\end{abstract}

\section{Introduction}
Gödel logics form a central class of intermediate logics, characterized by truth values in linearly ordered subsets of the unit interval $[0,1]$. While they enjoy elegant semantic and proof-theoretic properties, they exhibit a fundamental structural asymmetry between the truth values $0$ and $1$.
In standard Gödel semantics, the value $0$ is distinguishable via negation, whereas the value $1$ is not. This asymmetry arises from the continuity of all logical connectives at $1$: while $\neg \neg A$ detects whether $A>0$, there is no connective that detects whether $A< 1$. Consequently, the top truth value behaves in a fundamentally different way from the bottom one.
To overcome this limitation, the absoluteness operator $\del$ has been introduced by Baaz in \cite{Baaz:Delta}. Semantically, $\del A$ evaluates to $1$ if and only if $A$ evaluates exactly to $1$, and to $0$ otherwise. 

However, this comes at a cost. The $\del$-operator is not definable from the standard connectives and fundamentally alters the logical landscape. In particular, adding $\del$ affects key proof-theoretic properties, such as the deduction theorem and substitution closure\footnote{Note that the equivalence principle is not valid when $\del$ is added.}, and increases the complexity of the logic. 
This leads to a central problem:
\begin{itemize}[nosep]
    \item Semantically, $\del$ is natural and expressive, as it captures the notion of absolute truth and distinguish between true and false.
    \item Proof-theoretically, $\del$ is problematic: it complicates cut-elimination, and resists elimination in standard semantics as shown in \cite{PrenexDelta}.
\end{itemize}

In this paper, we will analyze under which circumstances $\del$ becomes eliminable.
The elimination of $\del$ is motivated by several factors: it simplifies the investigation of prenex fragments, enhancing semantic interpretations and decision procedures; it clarifies the logical relevance of $\del$, explicitly showing when it impacts formula-validity; restores substitution–closure (lost once \(\del\) is added); influences axiomatizations and decision procedures; and it reveals how $\del$ influences logic completeness, especially at the first-order level. In addition, we show that standard and restricted semantics coincide at the propositional level without $\del$ (Proposition \ref{equivalence}). 

In this paper we distinguish two notions of $\del$-elimination:
 under which the effective\footnote{We distinguish between preservation of effective validity-equivalence and logical equivalence. This distinction is only meaningful in the first-order setting, since in the propositional case, validity is decidable and thus validity-equivalence is always effective \cite{BaazFasching2009}. } validity-equivalence is preserved\footnote{This form of elimination holds in both standard and restricted semantics. This result shows that $\del$ does not increase expressive power at the level of effective validity. However, it does not preserve logical equivalence.}, and in a much stronger sense when the logical-equivalence is preserved\footnote{This form of elimination holds in  restricted semantics but fails in standard semantics.}. The difference between these notions is trivial only in the first-order setting. For that purpose, we introduce a novel restricted semantics, which, to our knowledge, has not been previously studied in the literature. These semantics result in notable consequences, including axiomatization without $\del$, recursive-inseparability of certain sentence classes at the first-order level, and restoration of an unlimited deduction theorem typically restricted in Gödel logics with $\del$.

The first half of the paper studies $\del$-elimination under restricted semantics at the propositional level by transforming formulas into $\del$-free chain normal forms (Corollary \ref{l - 6.1}).
While $\del$-elimination is achievable in the propositional case under restricted semantics, the situation changes dramatically in the first-order setting. In particular, due to recursion-theoretic and topological constraints on truth value sets, $\del$ cannot, in general, be eliminated effectively. This highlights a sharp boundary between propositional and first-order Gödel logics. In the second half of the paper, we extend our results to the first-order setting, where, in addition to the restricted semantics, we also assume witnessed interpretations.

A closely related distinction arises between \emph{standard} and \emph{witnessed} Gödel logics. In standard semantics, quantifiers are interpreted via infima and suprema, which need not be attained. In witnessed Gödel logics, these extrema are required to be realized by actual elements of the domain.
This difference has significant consequences. In witnessed logics, many classical principles, such as quantifier shift rules, become valid, and, importantly, the $\del$-operator can be eliminated syntactically via suitable translations\cite{BaazFasching2009} preserving the effective validity-equivalence. By contrast, in non-witnessed settings, $\del$ remains intrinsically tied to the semantics.
By combining witness semantics with restricted semantics we obtain $\del$-elimination mechanism which preserves the logical equivalence, and therefore is much stronger result.






\section{Preliminaries}

Gödel logics $G_V$, is a family of many-valued logics (where the set of truth values $V$ is a closed subset of $[0,1]$ containing both 0 and 1) that form an essential class of intermediate logics, those that are stronger than intuitionistic logic, yet weaker than classical logic. The language is standard (propositional, first-order) with countably infinite propositional variables $A_i$, connectives~$\land$, $\lor$, $\limp$, and the constants ~$\bot$ for "false" and $\top$ for "true'; Atomic formulas include propositional variables and truth constants. Throughout this paper we mainly focus on infinite-valued Gödel logics.\footnote{Note that this does not make difference at the propositional level but on the first-order.} In this section we recall the basic definitions of Gödel logics, 
including their semantics, validity, and the $\del$-operator.
\begin{definition}(Standard Semantics.)
G\"odel logics are a family of many-valued logics where the truth values set (known also as \emph{G\"odel set}) \(V\) is closed subset of the full \([0,1]\) interval that includes both $0$ and $1$ given by the following valuation~$\I$ based on~$V$ as a function
  from the set of propositional variables into $V$ given as follows: 
\begin{align*}
   (1) & \quad \mathcal{I}(\bot)= 0 \\
   (2) & \quad \mathcal{I}(A \wedge B)= min \{\mathcal{I}(A), \mathcal{I}(B)\}  \\
   (3) & \quad \mathcal{I}(A \vee B)= max \{\mathcal{I}(A), \mathcal{I}(B)\} \\
   (4) & \quad  \mathcal{I}(A \rightarrow B) = \begin{cases}\I(B) & \text{if\/ $\I(A) > \I(B)$,} \\
               1     & \text{if\/ $\I(A) \le \I(B)$.}\end{cases}  \\ 
 (5)  & \quad \mathcal{I}(\forall x A(x))= \text{inf}\{\mathcal{I}( A(u))\: u\in U_\mathcal{I}\}\\
 (6) & \quad\mathcal{I}(\exists x A(x))= \text{sup} \{\mathcal{I}( A(u))\: u\in U_\mathcal{I}\}
 \end{align*}
This yields the following definition of the semantics of~$\neg$:
  \[(7)\quad   \I(\neg A) = \begin{cases} 0 & \text{if $\I(A)>0$}\\
    1 & \text{otherwise}\end{cases}\]
\end{definition}

\begin{remark}\label{remark1}
We use the following abbreviation throughout the paper:  $A\lequiv B$ for $ (A\rightarrow B) \land (B\rightarrow A)$ and  $A<B$ for \((B\rightarrow A)\rightarrow B\), which means that the value of \(A\) is bigger than the value of \(B\) with the exception that both take the value $1$.
\end{remark}

\begin{definition}(Validity). \label{d - 1.4}
The formula in Gödel logic is \emph{valid} 
if the formula evaluates to 1 
under every interpretation. 
\end{definition}

\begin{remark}
  The Gödel logic \(G_V\)  is defined as the set of valid formulas.  Validity and unsatisfiability are not dual in Gödel logic, e.g. $A\vee \neg A$ is not valid but its negation is unsatisfiable.
\end{remark}

Note that in  G\"{o}del logics the validity and unsatisfiability of a formula depends only on the relative ordering and the topological type of the truth values of atomic formulas, and not on their specific values. 

\begin{definition}\label{d - 1.6}(Limit point, perfect space, perfect set).
A \emph{non-isolated point} of a topological space is a point \(x\) such that for every open neighbourhood \(U\) of \(x\) there exists a point \(y\in U\) with \(y\neq x\). A \emph{limit point} of a topological space is a point \(x\) that is not isolated.  A space is \emph{perfect} if all its points are limit points.
A set \(P\subseteq R\) is \emph{perfect} if it is closed and together with the topology induced from \(R\) is a perfect space. 
\end{definition}

\begin{proposition}
 Whenever $G_D\geq G_E$, if  
 $\models_{G_D} A$ then $\models_{G_E} A$.
\end{proposition}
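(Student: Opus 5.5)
The plan is to read the hypothesis $G_D\geq G_E$ semantically, as the natural order on Gödel sets. Concretely, I take it to mean that $E$ can be embedded into $D$ by a map $h\colon E\to D$ that is strictly increasing, satisfies $h(0)=0$ and $h(1)=1$, and preserves all infima and suprema of subsets of $E$. For closed sets the last condition is the topological part: limit points go to limit points of the same kind. If instead $\geq$ is meant as inclusion of the sets of valid formulas, $G_D\subseteq G_E$, the statement is immediate from the definition of $G_V$ as the set of valid formulas (Definition~\ref{d - 1.4}). So the real content is the semantic reading. The strategy is to transfer a countermodel in $E$ to a countermodel in $D$, and then take the contrapositive.

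\textbf{Step 1 (setting up the transfer).} Suppose $\not\models_{G_E} A$. Then there is an interpretation $\I_E$ over $E$ with $\I_E(A)<1$; in the first-order case it comes with a universe $U$. Define $\I_D$ over $D$ on the same universe by $\I_D(P)=h(\I_E(P))$ for every atomic formula $P$, including instances $P(\bar u)$ with $\bar u$ from $U$.

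\textbf{Step 2 (the key lemma).} Prove by induction on $B$ that $\I_D(B)=h(\I_E(B))$ for every formula $B$, including formulas with parameters from $U$. The cases are as follows.
\begin{itemize}
\item The constant $\bot$ is handled by $h(0)=0$ (and $\top$ by $h(1)=1$).
\item For $\land$ and $\lor$, a strictly monotone map commutes with $\min$ and $\max$.
\item For $\to$, strict monotonicity preserves both order relations: $\I_E(B)\le \I_E(C)$ holds iff $h(\I_E(B))\le h(\I_E(C))$. So both clauses of the truth function match, using $h(1)=1$ in the first clause. The defined connective $\neg$ then follows.
\item The quantifier cases use the preservation of infima and suprema by $h$:
\[
\I_D(\forall x B(x))=\inf_{u\in U}h(\I_E(B(u)))=h\Bigl(\inf_{u\in U}\I_E(B(u))\Bigr).
\]
The existential case is symmetric, with suprema.
\end{itemize}

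\textbf{Step 3 (conclusion).} By the lemma, $\I_D(A)=h(\I_E(A))<h(1)=1$, because $h$ is strictly increasing. So $A$ is not valid in $G_D$. Taking the contrapositive, $\models_{G_D}A$ implies $\models_{G_E}A$.

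The main obstacle is the quantifier case of Step 2. A merely order-preserving embedding need not commute with infima and suprema: a limit point of $E$ could be sent to a point of $D$ that is not the corresponding limit, since $D$ may contain extra values in a gap. This is exactly why the hypothesis $G_D\geq G_E$ has to include continuity, in the sense of preserving the topological type of $E$ inside $D$. I would state that explicitly as the meaning of $\geq$. This is consistent with the remark before Definition~\ref{d - 1.6} that validity depends only on the relative order and topological type of the truth values. At the propositional level only finitely many values occur, so any order embedding fixing $0$ and $1$ suffices, and the obstacle disappears.
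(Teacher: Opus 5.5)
Your proof is correct and is the paper's argument: transfer a countermodel over $E$ to one over $D$, then take the contrapositive. The paper reads $G_D\geq G_E$ as $V_E\subseteq V_D$ and uses the inclusion map, which preserves infima and suprema automatically because both sets are closed in $[0,1]$; your embedding $h$ is therefore a harmless generalization, and the quantifier ``obstacle'' you flag does not arise under that reading.
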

\begin{proof}
Any counterexample in a smaller Gödel logic can be expressed in bigger logic only using the valuations of variables and truth values.    

Let $V_E \subseteq V_D$. Any interpretation into $V_E$ can be viewed as an interpretation into $V_D$. 
Thus, any countermodel in $G_E$ is also a countermodel in $G_D$, yielding the claim.
\end{proof}

\subsection{$\del$-Operator}
For G\"{o}del logics there is an
asymmetry between~$0$ and~$1$, this is because~$0$ can be distinguished from other values (by using the negation), while~$1$ cannot be
distinguished. The reason is that all connectives and quantifiers are
continuous at~$1$. To overcome this asymmetry the absoluteness operator~$\del$ (dual of negation) has
been introduced in~\cite{Baaz:Delta}.

\begin{definition}\label{d - 1.5} ($\del$-Operator)

\begin{align*}    
 \I(\del A) = \begin{cases} 1 & \text{if $\I(A)=1$,}\\ 0 &\text{otherwise}\end{cases}
\end{align*}
\end{definition}




\begin{proposition}
$\del$ is not definable using other connectives \(\land,\lor,\to,\bot,\top\) and variables 
\end{proposition}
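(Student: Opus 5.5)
We will show that every formula built from variables with $\land,\lor,\to,\bot,\top$ defines a function that is \emph{continuous at $1$}, in the following precise sense. Let $F(A_1,\dots,A_n)$ be such a formula and let $\I$ be a valuation with $\I(A_i)=1$ for all $i$ (or, more generally, any valuation). Define $\I_\varepsilon$ by sending every variable that has value $1$ under $\I$ to $1-\varepsilon$ and leaving the other values unchanged. We claim that for all sufficiently small $\varepsilon>0$, if $\I(F)=1$ then $\I_\varepsilon(F)\geq 1-\varepsilon$. Since $\del A$ takes the value $0$ whenever $\I(A)=1-\varepsilon<1$, no formula $F(A)$ can coincide with $\del A$. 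This gives the proposition, because $\I(F)=1$ when $A=1$ but $\I_\varepsilon(F)\geq 1-\varepsilon>0$ for $A=1-\varepsilon$.

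The claim is proved by induction on $F$, and the induction hypothesis is strengthened to cover every subformula, not only those with value $1$. Choose $\varepsilon$ smaller than $1-m$, where $m$ is the largest value below $1$ assigned by $\I$ to any variable; take $m=0$ if there is no such variable. The map $\varphi$ that fixes all values $\le m$ and sends $1$ to $1-\varepsilon$ is then order-preserving and injective on the finite set of relevant values. The strengthened statement is: for every subformula $G$, $\I_\varepsilon(G)=\varphi(\I(G))$ if $\I(G)<1$, and $\I_\varepsilon(G)\in\{1-\varepsilon,1\}$ if $\I(G)=1$.

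The cases run as follows.
\begin{itemize}
\item Atoms and $\bot$ hold by construction. For $\top$, the value is $1$.
\item For $\land$ and $\lor$, $\varphi$ is monotone and commutes with $\min$ and $\max$, and the two-element set $\{1-\varepsilon,1\}$ lies above everything else.
\item The implication $G\to H$ is the main case. If $\I(G)\le\I(H)$, the value is $1$. In the new valuation either the order is preserved, or both values lie in $\{1-\varepsilon,1\}$, giving a value of $1$ or $1-\varepsilon$. If $\I(G)>\I(H)$, then $\I(H)<1$ and the value is $\I(H)$. Since $\varphi$ preserves strict order below $1$, and any value in $\{1-\varepsilon,1\}$ exceeds $\varphi(\I(H))$, we get $\I_\varepsilon(G\to H)=\varphi(\I(H))$.
\end{itemize}

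The main obstacle is the implication case. There, one must check that when both arguments fall into the ``top cluster'' $\{1-\varepsilon,1\}$, the result stays inside that cluster. This is exactly why the strengthened hypothesis tracks the cluster rather than exact values. Applying the claim with $n=1$ and $\I(A)=1$ finishes the proof.
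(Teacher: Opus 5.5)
Your proof is correct, but it takes a different route from the paper. The paper works with the finite list of one-variable term functions $\top,\bot,A,\neg A,A\lor\neg A,\neg A\to A$. It claims this list is closed under the connectives and checks on the three-element chain $\{0,\tfrac12,1\}$ that none of them agrees with $\del$. It then transfers the result to larger Gödel sets, because $\{0,\tfrac12,1\}$ is a subalgebra.

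You instead prove a perturbation lemma by structural induction. The lemma makes precise the ``continuity at $1$'' that the paper only invokes informally, in the introduction and in the subsection on $\del$. It says that lowering every top value to a point $1-\varepsilon$ above all other values leaves every subformula value below $1$ unchanged. It also keeps every value-$1$ subformula inside $\{1-\varepsilon,1\}$.

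Your route buys three things:
\begin{itemize}
\item It is self-contained: there is no closure check over all pairs of the six functions, which the paper's table leaves implicit.
\item It treats formulas in several variables directly.
\item It gives a sharper conclusion. When $n=1$ you have $m=0$, so every $\del$-free $F(A)$ with value $1$ at $A=1$ satisfies $F(x)\in\{x,1\}$ for all $x\in(0,1)$. Hence any unary operator violating this is undefinable, not just $\del$.
\end{itemize}

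The paper's route gives instead a complete description of the one-variable fragment, and it applies verbatim to every Gödel set with at least three values. To match that generality, replace $1-\varepsilon$ in your final step by an arbitrary element of $V\cap(0,1)$.

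One detail is worth stating explicitly. Each connective returns one of its arguments or a constant, so every $\I(G)$ lies in $\{0,1\}\cup\{\I(A_1),\dots,\I(A_n)\}$. This is what makes $\varphi(\I(G))$ defined, and it shows that $\varphi(\I(G))=\I(G)$ whenever $\I(G)<1$.
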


\begin{proof}
There is a finite number of 1-variable function in infinite-valued propositional Gödel logic
\begin{align*}
\top ,
\bot, 
A, 
\neg A,
\neg A\lor A,
\neg A\rightarrow A.  
\end{align*}
Assume that $\del$ is definable by some of the function $F$, i.e., $\del A\lequiv F(A)$.
Now we look at the $F$ in $G_3$, because if $\del$ is not definable in $G_3$ then it is not definable in all larger propositional Gödel logics.\\
The following truth table shows that none of them defines $\del$ and they are closed under composition by all connectives:  
 \begin{center}
 \begin{tabular}{c|c|c|c|c|c|c}
$A$ & $\neg A$ & $\top$ & $\bot$ & $A\lor \neg A$& $\neg A\limp A$ & $\del A$\\
\hline
0 & 1 & 1 & 0 & 1 & 0 & 0\\
1/2 & 0 & 1& 0&1/2 & 1 & 0\\
1 & 0&1 &0 & 1 & 1 & 1
\end{tabular}   
\end{center}
Therefore we introduce the connective $\del$ extending 
the language. 
\end{proof}

We consider Gödel logics with or without an absoluteness  operator~$\del$ and denote by $G_V^\del$ or $G_V$, respectively.

\begin{theorem}
If a formula with $n$-variables holds in a $n+2$ valued propositional Gödel logic with $\del$, then it holds in all propositional Gödel logics with $\del$. Consequently, the infinite-valued propositional Gödel logic with $\del$ is the intersection of the finite ones (not considering the entailment). There is only one infinite-valued propositional Gödel logic with $\del$.
\end{theorem}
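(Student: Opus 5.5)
We prove the contrapositive: if a formula $F$ with $n$ variables $A_1,\dots,A_n$ is refuted by an interpretation $\I$ into some Gödel set $V$, then it is refuted in the $(n+2)$-valued logic with $V_{n+2}=\{0,\frac{1}{n+1},\dots,\frac{n}{n+1},1\}$. The key observation is that the value of a propositional formula, even with $\del$, depends only on the relative order of the values $\I(A_1),\dots,\I(A_n)$ together with $0$ and $1$. More precisely, we show that for every order-preserving map $h$ from the finite set $W=\{0,1,\I(A_1),\dots,\I(A_n)\}$ into $[0,1]$ that is strictly monotone and satisfies $h(0)=0$ and $h(1)=1$, the interpretation $\I'=h\circ\I$ on the atoms satisfies $\I'(G)=h(\I(G))$ for every subformula $G$ of $F$.

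The proof is by induction on $G$, after first checking that $\I(G)\in W$ throughout. This holds because every connective returns either one of its arguments, $0$, or $1$. The case analysis is as follows.
\begin{itemize}
\item For $\bot$ and $\top$ the claim follows from $h(0)=0$ and $h(1)=1$.
\item For $\land$ and $\lor$ it holds because a strictly monotone $h$ commutes with $\min$ and $\max$.
\item For $\limp$, strict monotonicity gives $\I(A)\le\I(B)$ if and only if $h(\I(A))\le h(\I(B))$. The output is therefore $1=h(1)$ or $h(\I(B))$, as required.
\item For $\del$, we use $\I(A)=1$ if and only if $h(\I(A))=1$, which needs injectivity together with $h(1)=1$. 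The output is then $h(1)$ or $h(0)$.
\end{itemize}
This step is where $\del$ interacts with the argument: it is exactly why $1$ must be included in $W$ and preserved by $h$. Ordinary Gödel logic only needs $1$ preserved for the implication case, so nothing new is required.

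Since $|W|\le n+2$, we can choose $h$ mapping $W$ in order onto an initial-plus-final segment of $V_{n+2}$ with $0\mapsto 0$ and $1\mapsto 1$. Then $\I'(F)=h(\I(F))\neq 1$ whenever $\I(F)\ne 1$, so $F$ fails in $G^\del_{n+2}$. This proves the first claim. One subtlety is that $G_{n+2}$ formally has the specific set $V_{n+2}$, but any $(n+2)$-element Gödel set is order-isomorphic to it by the same lemma, so the choice does not matter. Conversely, every formula valid in all $G_V^\del$ is valid in the finite ones.

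For the second claim, suppose $F$ is valid in all finite $G^\del_k$. Then it is valid in $G^\del_{n+2}$ in particular, and hence in every $G_V^\del$ by the first claim. Together with the trivial inclusion, the infinite-valued logic equals $\bigcap_k G^\del_k$. For the third claim, any infinite Gödel set $V$ gives a logic equal to this intersection, so all infinite-valued propositional logics with $\del$ coincide. The main obstacle is just stating the order-invariance lemma carefully, in particular ensuring that $h$ fixes both endpoints. The rest is routine.
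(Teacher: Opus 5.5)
Your proof is correct and follows the same outline as the paper's proof. Both reduce any refutation to one in the $(n+2)$-valued logic and then read off the intersection and uniqueness claims. The paper only asserts that reduction, leaning on its earlier remark that validity depends only on the relative order of the atom values. Your order-invariance lemma, with a strictly monotone $h$ fixing $0$ and $1$, supplies the missing justification; injectivity at $1$ is exactly what the $\del$ case needs.

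One step deserves a word more: the inclusion of a fixed infinite $G_V^\del$ in each finite $G_k^\del$. This is not literally trivial, but it follows from your same lemma by mapping the finite value set strictly monotonically into $V$ while fixing $0$ and $1$.
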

\begin{proof}
A propositional formula with $\leq n$-variables is valid in any infinte-valued Gödel logic (with or without $\del$) if it is also valid in $n+2$ valued propositional Gödel logic (with or without $\del$). Therefore, the infinite-valued Gödel logic with $\del$ is the intersection of the finite ones, hence there is only one infinite-valued propositional Gödel logic with $\del$. Note that it is not the case in first-order setting. 
\end{proof}

$\del$ allows the definition of a strict order $A<_\del B\equiv \ndel(B\rightarrow A)$.
Baaz gave the following axiomatization of Gödel logics with $\del$ using Hilbert style calculus \cite{Baaz:Delta}:
  \begin{align*}
    \axdi &\quad \del A\lor \lnot\del A\\
    \axdii &\quad \del (A\lor B) \limp (\del A \lor \del B)\\
    \axdiii&\quad \del A \limp A\\
    \axdiv &\quad \del A \limp \del\del A\\
    \axdv   &\quad \del (A\limp B)\limp(\del A\limp\del B) \\
    \axdr & \quad A \proves \del A
  \end{align*}
  
Then $G^{\del}_{[0,1]}$ is axiomatized by
\[
G_{[0,1]} + \del.
\]

\begin{remark} 
Propositional Gödel logics can be identified through well-founded linear Kripke structures. 
The $\del$-operator in Gödel logics can be interpreted as a stability operator, meaning:
 $\del A$ holds if and only if $A$ is true in all future and past worlds. 
$\del A$ being true in given world implies that $A$ has also been true in all past worlds.


\end{remark}

\begin{remark}

   \begin{itemize}[nosep]
   \item \(\del\) is semantically a crisping or stability operator, coincides with modal logic $S4$ with additional axioms $\axdi, \axdii $\cite{Baaz:Delta}.
    \item At the propositional level all infinite-valued Gödel logics
    admit hypersequent calculi\cite{PrenexDelta}; for the first-order case 
    only Gödel logic corresponding to the full interval \([0,1]\) admits hypersequent calculus, In case of uncountable truth set with $0$ isolated, the existence if such calculi is unknown, in all other case, infinite-valued Gödel logics are not r.e.  \cite{Baaz-Ciabattoni:TaitCutElimGoedel}.
      
    \item Standard Gödel logics enjoy the full deduction theorem; adding \(\del\) breaks this property in general (so the deduction theorem may fail in the presence of \(\del\)).
    \item Propositional Gödel logics with $\del$  can not be considered as intermediate logics.    

 \end{itemize}
\end{remark}

\subsection{Restricted Semantics}

In general, the absoluteness operator $\del$ is not eliminable in Gödel logics. However, with a slight modification of the semantics, we show that $\del$ can be eliminated at the propositional level. Notably, while standard Gödel logics are closed under substitutions, this closure no longer holds in the restricted semantics where $\del$ is eliminable.

\begin{definition}(Restricted semantics)
   Let $V \subseteq [0,1]$ be a Gödel set. 
An interpretation $\I$ satisfies the restricted semantics if $\I(p) < 1$
for every propositional variable $p$ (and more generally, for every atomic formula), 
except for the truth constant $\top$.
We denote the corresponding logic by $G_V^{-}$.
  \end{definition}
Intuitively, this restriction prevents atomic formulas from attaining the maximal truth value. 
As a result, the operator $\del$, which detects exact truth, loses its ability 
to distinguish atomic formulas from compound ones, enabling its elimination.

One of the key observations of this paper is that restricted semantics can be captured syntactically by adding simple axioms to standard Gödel logic.
Restricted semantics requires that all atomic formulas (except $\top$) are interpreted strictly below $1$. This semantic condition can be enforced by axioms excluding the possibility that atoms take the value $1$. Thus, the restricted semantics is axiomatized by extending Gödel logic with $\neg \del X_i$
for all atoms $X_i$ except $\top$. 
In the propositional settings, for each propositional variable $p$ occurring in a formula, we add the axiom $\neg \del p$. This expresses that $p$ never evaluates to $1$.
In the first-order case, the conjunction ranges over all predicate symbols occurring in a formula, using universally quantified instances $\forall x\, \neg \del P(x)$. This ensures that no instance of $P$ attains the value $1$.

Importantly, these axioms are required only for the atomic formulas that actually occur in the given formula. Thus, the axiomatization is \emph{local} and depends on the formula under consideration.

\begin{proposition}(Reduction to Standard Semantics).
 Let $F(X_1,\dots,X_n)$ be a propositional formula. Then the following are equivalent:

\begin{enumerate}[nosep]
\item $F$ is valid under restricted semantics;
\item The formula 
\[
(\neg \del X_1 \land \cdots \land \neg \del X_n) \to F
\]
is valid in standard Gödel logic with $\del$.
\end{enumerate}
\end{proposition}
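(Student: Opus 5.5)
The plan is to compare the two semantics valuation by valuation. The key observation is that the antecedent
\[
C \;:=\; \neg \del X_1 \land \cdots \land \neg \del X_n
\]
takes only the values $0$ and $1$. Moreover, under a standard valuation $\I$ it takes value $1$ exactly when $\I(X_i)<1$ for all $i$, that is, exactly when $\I$ (restricted to the variables of $F$) is a restricted valuation. To see this, note that $\I(\del X_i)\in\{0,1\}$ by Definition~\ref{d - 1.5}. Hence $\I(\neg\del X_i)=1$ iff $\I(\del X_i)=0$ iff $\I(X_i)\neq 1$, and the minimum of $\{0,1\}$-values is $1$ iff all of them are $1$.

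Next, I will use the implication clause (4) in the case of a crisp antecedent. If $\I(C)=0$ then $\I(C\to F)=1$. If $\I(C)=1$ then $\I(C\to F)=\I(F)$, because either $\I(F)=1$ and both sides equal $1$, or $\I(F)<1=\I(C)$ and clause (4) returns $\I(F)$. So $\I(C\to F)=1$ iff ($\I$ is not restricted on $X_1,\dots,X_n$, or $\I(F)=1$).

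Both directions then follow.
\begin{itemize}[nosep]
\item (1)$\Rightarrow$(2): Take any standard valuation $\I$. If some $\I(X_i)=1$, then $\I(C)=0$ and we are done. Otherwise $\I$ assigns values $<1$ to all variables of $F$. Since $\I(F)$ depends only on $X_1,\dots,X_n$ (truth constants are interpreted the same way in both semantics), $\I$ can be modified on the remaining variables to values $<1$ without changing $\I(F)$. This gives a restricted valuation, so $\I(F)=1$ by (1).
\item (2)$\Rightarrow$(1): Every restricted valuation is in particular a standard one with $\I(C)=1$, so $\I(F)=\I(C\to F)=1$.
\end{itemize}
This argument works for formulas $F$ that themselves contain $\del$, since the computation never inspects the structure of $F$.

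There is no serious obstacle. The only point needing care is the locality remark: restricted semantics constrains all atoms, while $C$ mentions only $X_1,\dots,X_n$. This is handled by the fact that the value of $F$ depends only on its own variables, together with the observation that $V$ contains values below $1$ (e.g.\ $0$) to reassign to the other variables. I will also note that the argument is uniform in the Gödel set $V$.
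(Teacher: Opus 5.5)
Your proof is correct and follows the same valuation-by-valuation argument as the paper: the antecedent $\neg\del X_1\land\cdots\land\neg\del X_n$ takes only the values $0$ and $1$, and it equals $1$ exactly on the valuations that are restricted on the variables of $F$. You are more careful than the paper's own proof in two places: you treat the case where some $\I(X_i)=1$ in the $(\Rightarrow)$ direction, which the paper omits, and you justify, by reassigning the remaining variables, why a standard valuation with all $\I(X_i)<1$ can be regarded as a restricted one, which the paper assumes silently.
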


\begin{proof}
($\Rightarrow$) Assume $F$ is valid under restricted semantics. 
Let $\I$ be any interpretation in standard semantics. 
If $\I(X_i) < 1$ for all $i$, then $\del X_i = 0$, hence $\neg \del X_i = 1$, 
and the antecedent evaluates to $1$. Thus $\I(F) = 1$.

($\Leftarrow$) Assume the implication is valid. 
Let $\I$ be a restricted interpretation, so $\I(X_i) < 1$ for all $i$. 
Then $\del X_i = 0$, hence $\neg \del X_i = 1$, 
and therefore the implication forces $\I(F) = 1$. 
Thus $F$ is valid in the restricted semantics.
\end{proof}

We have the following immediate consequence:
\begin{proposition}
 In the standard semantics $(\neg \del X_1\land \dots \land \neg \del X_n) \limp F$  is validity-equivalent to  $F\lor  X_1\lor \dots \lor X_n$.   
\end{proposition}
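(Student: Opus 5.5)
The plan is to show that both formulas are valid under exactly the same condition on interpretations. Write $N$ for the antecedent $\neg \del X_1 \land \dots \land \neg \del X_n$. The key observation is that $N$ is crisp. Each $\del X_i$ takes only the values $0,1$, hence so does each $\neg\del X_i$, and hence so does their minimum. Concretely, $\I(N)=1$ iff $\I(X_i)<1$ for all $i$, and $\I(N)=0$ otherwise. Nothing about the internal structure of $F$ is used, so $F$ may itself contain $\del$.

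First I will compute when the left formula evaluates to $1$ under a fixed standard interpretation $\I$. By clause (4), $\I(N\limp F)=1$ iff $\I(N)\le \I(F)$. If $\I(N)=0$, this holds trivially. If $\I(N)=1$, it holds iff $\I(F)=1$. So $\models N\limp F$ iff the following condition holds:
\[
(\ast)\quad \text{for every } \I:\ \text{if } \I(X_i)<1 \text{ for all } i, \text{ then } \I(F)=1 .
\]

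Second, I will compute when the right formula evaluates to $1$. We have $\I(F\lor X_1\lor\dots\lor X_n)=\max\{\I(F),\I(X_1),\dots,\I(X_n)\}$, which equals $1$ iff some disjunct takes the value $1$, since all values lie in $[0,1]$. Hence $\models F\lor X_1\lor\dots\lor X_n$ iff for every $\I$, either $\I(X_i)=1$ for some $i$ or $\I(F)=1$. This is the contrapositive reformulation of $(\ast)$. So both formulas are valid iff $(\ast)$ holds, which gives validity-equivalence.

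There is no real obstacle here; the only point requiring care is the crispness of $N$, which is what makes the implication in clause (4) collapse to a case distinction. It is also worth remarking that the two formulas are not logically equivalent: when some $X_i$ takes a value strictly between $0$ and $1$, the right-hand disjunction may take an intermediate value while the left-hand implication takes $1$. This is why the statement claims only validity-equivalence. Combined with the previous proposition, the statement yields that $F$ is valid under restricted semantics iff $F\lor X_1\lor\dots\lor X_n$ is valid in standard semantics.
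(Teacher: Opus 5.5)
Your proof is correct, but it is semantic, while the paper's is a syntactic derivation in the Hilbert calculus with the $\del$-axioms. Writing $N$ for the antecedent, as you do, the paper argues as follows:
\begin{itemize}
\item It uses \axdi{} to rewrite the crisp implication $N\limp F$ as $\neg N\lor F$. This gives a genuine logical equivalence $N\limp F\lequiv F\lor\del X_1\lor\dots\lor\del X_n$.
\item It then weakens this to $F\lor X_1\lor\dots\lor X_n$ via $\del X_i\limp X_i$.
\item For the converse, it applies the rule \axdr{} to $F\lor X_1\lor\dots\lor X_n$, distributes $\del$ over $\lor$ (\axdii), and removes the $\del$ in front of $F$ via $\del F\limp F$.
\end{itemize}
Your crispness observation is the semantic counterpart of \axdi. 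Your remark that a finite maximum equals $1$ iff one of its arguments does is what \axdr{} together with \axdii{} encode.

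The paper's version stays inside the calculus, which suits its aim of capturing restricted semantics by added axioms. It also shows that only the passage between $\del X_i$ and $X_i$ is merely validity-preserving. However, it relies on soundness and completeness of the axiomatization. It also writes $\equiv$ for that last step, which is not a logical equivalence. Your version is self-contained and uniform in $V$ and in $F$. It also proves more than validity-equivalence: for every single interpretation $\I$, $\I(N\limp F)=1$ iff $\I(F\lor X_1\lor\dots\lor X_n)=1$. Equivalently, $\del(N\limp F)\lequiv\del(F\lor X_1\lor\dots\lor X_n)$ is valid.

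Your closing remark on non-equivalence is right in substance, but the scenario you describe cannot occur, because of this pointwise agreement. If the implication has value $1$, then some $\I(X_i)=1$ or $\I(F)=1$, so the disjunction has value $1$ too. The discrepancy runs the other way. When all $\I(X_i)<1$ and $\I(F)<1$, the implication has value $\I(F)$, while the disjunction has value $\max\{\I(F),\I(X_1),\dots,\I(X_n)\}$. For instance, $F=\bot$, $n=1$, $\I(X_1)=1/2$ gives $0$ versus $1/2$. This does not affect your proof of the proposition.
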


\begin{proof}
Note that the axiom $\axdi$ holds also in the restricted semantics, therefore we have \[A\limp B \equiv \neg A \lor B,\]
Using the fact that $\neg \neg \del X_i \equiv \del X_i$ and by $\axdiv$ 
we have $\del X_i\limp X_i$, thus

\[
(\neg \triangle X_1 \land \cdots \land \neg \triangle X_n) \to F
\equiv F \lor \triangle X_1 \lor \cdots \lor \triangle X_n \equiv F \lor X_1 \lor \cdots \lor X_n.
\]
In the other direction, by $\axdr$ $F \lor X_1 \lor \cdots \lor X_n$ derives $\del (F \lor X_1 \lor \cdots \lor X_n)\equiv \del F \lor \del X_1 \lor \cdots \lor \del X_n$, therefore, by the axiom $\axdiv$ again, it is equivalent to $(\neg \del X_1\land \dots \land \neg \del X_n) \limp F$.
\end{proof}

\begin{remark}
    As a consequence, decision procedures for standard propositional Gödel logics with $\del$ can be applied to restricted semantics via the above reduction.
    The recursive enumerability status of restricted Gödel logics is implied by the corresponding standard Gödel logics. 
    Restricted semantics can be fully captured by adding predefined number of axioms to the standard system for each $\del-$formula.
    From the viewpoint of computability and axiomatization, restricted semantics behaves like a definitional extension rather than a genuinely new logic.
\end{remark}

\subsection{Witnessed Semantics}

In standard first-order Gödel logic, quantifiers are interpreted using infimum and supremum:
\[
I(\forall x\, A(x)) = \inf_{u \in D} I(A(u)), 
\qquad
I(\exists x\, A(x)) = \sup_{u \in D} I(A(u)).
\]

\begin{example}
Suppose the truth values of the instances of $A(x)$ are

\[
0.6,\; 0.55,\; 0.52,\; 0.51,\; 0.505,\; \dots
\]

Then

\[
\inf \{ I(A(u)) : u \in |I| \} = 0.5
\]

However, no element of the domain actually yields the value $0.5$.  
Thus the infimum is not attained by any instance of the formula.
\end{example}
Therefore we introduce witnessed semantics:

\begin{definition}(Witnessed Semantics.)
In \emph{witnessed Gödel logics}, the above definition is strengthened by requiring that these extrema are realized by elements of the domain:
\[
I(\forall x\, A(x)) = \min_{u \in D} I(A(u)), 
\qquad
I(\exists x\, A(x)) = \max_{u \in D} I(A(u)).
\]

\end{definition}

This distinction is crucial for the interaction with $\del$. There is no restriction on truth values but the number of interpretations.



Witnessed Gödel logic can be axiomatized by extending Gödel logic with formulas ensuring the existence of witnesses \cite{BaazFasching2009}:
 \[\text{Infimum is attained:} \quad 
    \exists x\, (A(x) \rightarrow \forall y\, A(y)),
    \]
\[\text{Supremum is attained:}\quad 
    \exists x\, (\exists y\, A(y) \rightarrow A(x)).
    \]


An important consequence of the witness axioms is that certain
classically valid quantifier shift rules become valid in witnessed
Gödel logics.
In particular, the following formulas, which are not
valid in ordinary Gödel logic, become valid in witnessed Gödel logic:

\[
(\forall x\, A(x) \rightarrow B) \rightarrow \exists x\,(A(x) \rightarrow B)
\]
\[
(A \rightarrow \exists x\, B(x)) \rightarrow \exists x\,(A \rightarrow B(x)).
\]


In ordinary Gödel logic, the value of $\forall x\, A(x)$ may arise from a non-attained infimum, so there need not exist an element $x$ that realizes this value. Hence, the quantifier shift may fail. In witnessed Gödel logic, however, such an element always exists. Therefore, the classical quantifier shift becomes sound. This is one of the main conceptual advantages of witnessed semantics.
 We denote the use of witnessed semantics by $GW_V$.


We now analyze the interaction between quantifiers and the $\del$-operator.
The universal quantifier behaves well with respect to $\del$:
\[
\del \forall x\, A(x) \;\leftrightarrow\; \forall x\, \del A(x) \footnote{
It is shown that all Gödel logics can be identified with linear Kripke frames of constant-domain \cite{Beckmann-Preining:LinearKripkeGoedel}.}.
\]
In contrast, the existential quantifier exhibits an asymmetry:
\[
\exists x\, \del A(x) \;\Rightarrow\; \del \exists x\, A(x),
\]
but the converse implication fails in general\footnote{ $\exists x\, \del A(x)$ requires an element whose value is exactly $1$, $\del \exists x\, A(x)$ only requires that the supremum is $1$.}.
The asymmetry between $\exists$ and $\del$ can be understood semantically.
In standard Gödel logic: $\exists x\, A(x)$ is defined as a supremum, the supremum may be $1$ without any element attaining the value $1$.
Thus, $\del \exists x\, A(x) = 1$
may hold even though $\forall x,\; I(A(x)) < 1.$
In this situation, $\exists x\, \del A(x) = 0.$
In witnessed Gödel logic, this pathology disappears: if $\exists x\, A(x)$ has value $1$, then there exists $u$ such that $I(A(u)) = 1$, hence $\exists x\, \del A(x)$ holds as well.
Thus, witnessed semantics restores a closer alignment between $\exists$ and $\del$.
This improved interaction is one of the main reasons why $\del$ can be eliminated syntactically in witnessed Gödel logics, in contrast to the standard first-order setting.

To obtain witnessed Gödel logic with $\del$, we add the propositional axioms for $\del$ given at the page 5, and interaction principles between $\del$ and quantifiers. Then $GW^{\del}_{[0,1]}$ is axiomatized by
\[
GW_{[0,1]} + \del + \exists x\,\del A(x) \leftrightarrow \del \exists x\,A(x).
\]



\section{ Propositional Case}\label{6}

This section contains the core technical results of the paper. We analyze the behavior of the $\del$-operator at the propositional level and establish conditions under which it can be eliminated.

In standard Gödel logic, the so-called equivalence principle is stable under arbitrary contexts: if $A \leftrightarrow B$, then $E(A) \leftrightarrow E(B)$ for any context $E$. This is not the case when $\del$ is present in the language.


\begin{example}
The specific case
 $A\lequiv B \rightarrow \del A\lequiv \del B$  for the $\del$ operator also fails in the restricted semantics.
 To illustrate, let assign to $A$ value $\top$ and to $B$ some value strictly between 0 and 1. In this case, $\del A$ is $\top$ and $\del B$ is 0, yet $A\lequiv B$ is not 0. This contradiction demonstrates why the principle does not hold universally. Since the Left equivalence evaluates to value of $B$ and the right equivalence has value 0. Consequently the implication takes value 0.  
\end{example}

Therefore, we must modify the evaluation process for $\del$ to accommodate this limitation.
First, we show that
\begin{lemma} (Context‐Closure) \label{lemma1}
For any formulas $A,B,C,D$, and any context function $E$, the following implication holds:
   from $A\lor B \land (C\lequiv D)$ we derive $ A\lor B \land (E(C)\lequiv E(D))$
 
\end{lemma}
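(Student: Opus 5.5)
The statement should be read as a \emph{rule}: if $A\lor (B \land (C\lequiv D))$ is valid, then so is $A\lor (B\land (E(C)\lequiv E(D)))$. It cannot be read as an implication formula, since the Example preceding the lemma shows that $ (C\lequiv D)\limp(\del C\lequiv \del D)$ fails. I would argue semantically, one interpretation at a time, and I expect the argument to work uniformly for standard and restricted semantics and for contexts $E$ that contain $\del$. The parse $(A\lor B)\land(C\lequiv D)$ is handled the same way.

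Fix an interpretation $\I$ (standard or restricted). Validity of the premise gives $\I(A\lor (B\land(C\lequiv D)))=1$. Since $\lor$ is $\max$ and $\land$ is $\min$ on a linearly ordered set, a disjunction has value $1$ exactly when some disjunct has value $1$, and a conjunction has value $1$ exactly when both conjuncts do. So either $\I(A)=1$, or $\I(B)=1$ and $\I(C\lequiv D)=1$. In the first case the conclusion trivially gets value $1$. In the second case, $\I(C\lequiv D)=1$ means $\I(C\limp D)=\I(D\limp C)=1$, i.e.\ $\I(C)\le \I(D)$ and $\I(D)\le\I(C)$, hence $\I(C)=\I(D)$.

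The core step is a small induction on the context $E$: if $\I(C)=\I(D)$, then $\I(E(C))=\I(E(D))$. The base cases are the hole itself, atoms and $\top,\bot$, all immediate. Each connective $\land,\lor,\limp,\neg$ and $\del$ is computed truth-functionally from the values of its arguments (Definitions of the semantics and Definition~\ref{d - 1.5}), so equal argument values give equal results. This is exactly why $\del$ causes no harm here: the premise delivers \emph{equality of values}, not merely a high value of $C\lequiv D$. From $\I(E(C))=\I(E(D))$ we get $\I(E(C)\lequiv E(D))=1$, and together with $\I(B)=1$ the conclusion evaluates to $1$ under $\I$. Since $\I$ was arbitrary, the conclusion is valid.

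The main point of care is to keep the argument pointwise and not pass through an internal implication, since the equivalence principle itself fails with $\del$. If a syntactic derivation is wanted instead, I would first apply $\axdr$ to obtain $\del(A\lor (B\land (C\lequiv D)))$. Then $\axdii$ (with $\del$ distributing over $\land$) gives $\del A\lor(\del B\land \del(C\lequiv D))$. Next I would prove by induction on $E$ that $\del(C\lequiv D)\limp \del(E(C)\lequiv E(D))$, using $\axdiv$ and $\axdv$ for the $\del$ case. Finally $\axdiii$ strips the outer $\del$'s. The inductive $\del$-clause is the step needing the most checking.
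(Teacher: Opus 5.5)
Your proof is correct, but your main argument takes a different route from the paper's. The paper argues purely syntactically. It applies $\axdr$ to the premise to obtain $\del(A\lor (B\land(C\lequiv D)))$, distributes $\del$ over $\lor$ and $\land$, and passes from $\del(C\lequiv D)$ to $\del C\lequiv \del D$ (in effect an instance of $\axdv$). It then strips the remaining $\del$'s from $A$ and $B$ with $\axdiii$. The paper writes out only this single $\del$-layer and asserts the claim ``in consequence for any context function $E$''. That works by repeating the step through the layers of $E$, since each intermediate conclusion is again a valid formula of the same shape, to which $\axdr$ can be reapplied.

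Your primary argument is instead semantic and pointwise. Validity of the premise forces, at every interpretation, either $\I(A)=1$, or $\I(B)=1$ together with $\I(C)=\I(D)$. Truth-functionality of all connectives, $\del$ included, then propagates the exact equality through any $E$ in a single induction. What this buys:
\begin{itemize}
\item It is more elementary and treats all contexts uniformly.
\item It makes explicit why $\del$ is harmless here: the premise yields equality of values, not merely a high value of $C\lequiv D$.
\item It transfers verbatim to the restricted semantics, which only shrinks the class of interpretations.
\end{itemize}
The paper's route, in exchange, gives a derivation inside $G_{[0,1]}+\del$ rather than a soundness argument.

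Your secondary syntactic sketch recovers this as well, and it is cleaner than the paper's. You carry the invariant $\del(C\lequiv D)\limp\del(E(C)\lequiv E(D))$, with $\axdiv$ and $\axdv$ handling the $\del$-step. This keeps the $\del$ on the equivalence, so nested $\del$'s are handled in one induction, and $\axdiii$ is needed only once at the end.
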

\begin{proof}
   The result follows by induction on the complexity of the context function $E$
   Semantically, we observe that if $C\lequiv D$ holds, then the implication $X\limp C$ is valid for an appropriate choice of $X$. This structure extends naturally to the transformed formula.
   Note that $A$ derives $\del A$, $\del$ distributes over connectives and $\del A$ derives $A$.
   
    \begin{center}
\AxiomC{$ A\lor B \land (C\lequiv D)
$}
\UnaryInfC{$\del( A\lor B \land (C\lequiv D))$}
\UnaryInfC{$\del A\lor \del B \land \del(C\lequiv D)
)$}
\UnaryInfC{$\del A\limp A$ \quad $\del B \limp B$ \quad $\del A\lor \del B \land (\del C\lequiv \del D)$}
\UnaryInfC{$ A\lor  B \land (\del C\lequiv \del D)$}
\DisplayProof
\end{center}
and in consequence for any context function $E$.
\end{proof}

\begin{remark}
This lemma holds for both standard and restricted semantics. Moreover, in standard Gödel logics without $\del$, it ensures full equivalence, as these logics satisfy the full deduction theorem.  This is used to push implication-clauses into contexts, needed to eliminate inner \(\del\) occurrences.
\end{remark}

\begin{lemma}(Transitivity)\label{lemma2}
The transitive closure of equivalence in the restricted semantics holds, i.e., for any formulas $C,D,L$:
\begin{center}
    $ (A\lor (B \land (C\lequiv D))) \land (D\lequiv L) \Rightarrow (A\lor B \land (C\lequiv L))$.
\end{center}

\end{lemma}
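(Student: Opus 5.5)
The plan is to show that the statement is already a theorem of Gödel logic without $\del$. It then holds a fortiori in the restricted semantics, since every restricted interpretation is a standard one. I would read the conclusion as $A\lor(B\land(C\lequiv L))$, matching the bracketing of the premise. I would prove the stronger claim that the implication
\[
\bigl((A\lor (B \land (C\lequiv D))) \land (D\lequiv L)\bigr)\limp \bigl(A\lor (B \land (C\lequiv L))\bigr)
\]
is valid. The derivability reading used in Lemma \ref{lemma1} follows from this by modus ponens.

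\textbf{Step 1: distributivity.} Since $\land$ and $\lor$ are $\min$ and $\max$ on a linear order, distributivity gives
\[
(A\lor (B\land X))\land Y \;\lequiv\; (A\land Y)\lor (B\land X\land Y).
\]
Weakening $A\land Y$ to $A$, the antecedent implies $A\lor\bigl(B\land (C\lequiv D)\land(D\lequiv L)\bigr)$.

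\textbf{Step 2: transitivity of $\lequiv$.} This is the key step. I would show that $(C\lequiv D)\land(D\lequiv L)\limp(C\lequiv L)$ is valid in Gödel logic. It suffices to treat one direction, $(C\limp D)\land(D\limp L)\limp(C\limp L)$; the other is symmetric. I would check it by cases on the order of $\I(C),\I(D),\I(L)$:
\begin{itemize}[nosep]
\item If $\I(C)\le\I(L)$, the consequent is $1$.
\item Otherwise $\I(L)<\I(C)$, and the consequent has value $\I(L)$. Either $\I(D)\le\I(L)$, in which case $\I(C)>\I(D)$ and so $\I(C\limp D)=\I(D)\le\I(L)$. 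Or $\I(D)>\I(L)$, in which case $\I(D\limp L)=\I(L)$.
\end{itemize}
In both subcases the minimum of the two premises is at most $\I(L)$. Alternatively, this is just the intuitionistic law of hypothetical syllogism, which holds in every intermediate logic.

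\textbf{Step 3: monotonicity.} $\land$ and $\lor$ are monotone in each argument, so replacing $(C\lequiv D)\land(D\lequiv L)$ by $C\lequiv L$ inside $A\lor(B\land\cdot)$ preserves the implication. This yields the conclusion.

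I expect no real obstacle; the only points needing care are the case analysis in Step 2 and the bracketing. If the authors intend the rule form in the style of Lemma \ref{lemma1}, the same argument applies. From validity of both premises we get $\del(D\lequiv L)$, hence $\I(D)=\I(L)$, and $A\lor(B\land(C\lequiv D))$ has value $1$. Substituting $L$ for $D$ then preserves the value $1$. This can also be seen syntactically from Lemma \ref{lemma1} with the context $E(\cdot)=C\lequiv\cdot$.
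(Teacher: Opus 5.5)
Your proof is correct and rests on the same idea as the paper's: transitivity of $\lequiv$, valid in every Gödel logic, carried through the context $A\lor(B\land\,\cdot\,)$. The paper states this only at the rule level, from the premises $A\lor B\land(C\lequiv D)$ and $A\lor B\land(D\lequiv L)$. You instead prove the implication valid in plain Gödel logic, so the lift through the context is justified by monotonicity, and your distributivity step supplies the link between the statement's bare conjunct $D\lequiv L$ and the paper's premise $A\lor B\land(D\lequiv L)$, which the paper leaves implicit.
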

\begin{proof}
   This follows from the fact that in general in all Gödel logics  \begin{center}
\AxiomC{$ C\lequiv D
$}
\AxiomC{$  D\lequiv L
$}
\BinaryInfC{$C\lequiv L$}
\DisplayProof
\end{center}
Therefore
     \begin{center}
\AxiomC{$ A\lor B \land (C\lequiv D)
$}
\AxiomC{$ A\lor B \land (D\lequiv L)
$}
\BinaryInfC{$\ A\lor B \land (C\lequiv L)$}
\DisplayProof
\end{center}

\end{proof}

\begin{lemma} \label{lemma3}
  Given any expression $\del(C\lor (D\land a))$ for some variable $a$ in the restricted semantics where $C, D$ are valid expressions we can eliminate $\del$ obtaining $C$

\begin{center}
    \AxiomC{$\del(C\lor (D\land a))$}
\UnaryInfC{$C$}
\DisplayProof
\end{center}
  
\end{lemma}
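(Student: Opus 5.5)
Under restricted semantics the key fact is that the atom $a$ (a propositional variable, not $\top$) always satisfies $\I(a)<1$. So $D\land a$ can never have value $1$, and $\del$ applied to $C\lor(D\land a)$ can only "see" $C$. I read "valid expressions" as "arbitrary formulas", since the argument does not use validity of $C$ or $D$. The plan is to show that $\del(C\lor(D\land a))$ entails $C$ in two stages: first semantically, then syntactically.

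\textbf{Semantic step.} Fix any restricted interpretation $\I$.
\begin{itemize}[nosep]
\item If $\I(\del(C\lor(D\land a)))=0$, there is nothing to prove.
\item Otherwise $\I(\del(C\lor(D\land a)))=1$, so $\max\{\I(C),\min\{\I(D),\I(a)\}\}=1$.
\item Since $\min\{\I(D),\I(a)\}\le \I(a)<1$, the maximum can only be $1$ through $C$. Hence $\I(C)=1$.
\end{itemize}
Therefore in every restricted interpretation the value of $\del(C\lor(D\land a))$ is at most that of $C$. So the implication $\del(C\lor(D\land a))\limp C$ is valid in $G^{-}_V$, and the rule in the statement is sound. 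In fact we get the stronger equivalence $\del(C\lor(D\land a))\lequiv\del C$: the converse holds because $\del C\limp\del(C\lor(D\land a))$ is valid in every Gödel logic with $\del$.

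\textbf{Syntactic step.} I would derive the same conclusion in the calculus, using the local axiomatization of restricted semantics (the axiom $\neg\del a$) together with the $\del$-axioms.
\begin{enumerate}[nosep]
\item By $\axdii$, $\del(C\lor(D\land a))$ yields $\del C\lor\del(D\land a)$.
\item Since $(D\land a)\limp a$ is valid, $\axdr$ and $\axdv$ give $\del(D\land a)\limp\del a$.
\item Combined with $\neg\del a$, this yields $\neg\del(D\land a)$, so the second disjunct from step 1 collapses to $\bot$.
\item We are left with $\del C$, and $\axdiii$ gives $C$.
\end{enumerate}
This is a clean Hilbert-style derivation matching the proof-tree format of Lemma~\ref{lemma1}.

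\textbf{Main obstacle.} The delicate point is the interplay with the deduction theorem, which fails in the presence of $\del$. The derivation must be read as a rule, exactly as the statement presents it, or else carried out inside the valid implication $(\neg\del a)\limp(\del(C\lor(D\land a))\limp C)$ given by the Reduction to Standard Semantics proposition. Going via that proposition is what I would use to make the argument fully rigorous: it turns the restricted-semantics claim into a $\del$-validity in standard Gödel logic, where the semantic computation above is immediate.
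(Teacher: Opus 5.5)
Your proposal is correct and follows essentially the paper's route: distribute $\del$ over the disjunction via $\axdii$, discard the $\del(D\land a)$ disjunct because $\del a$ fails under the restricted semantics, and remove $\del$ from $C$ with $\axdiii$. The differences are minor. You obtain $\del(D\land a)\limp\del a$ by monotonicity ($\axdr$ with $\axdv$), where the paper distributes $\del$ over $\land$. You work directly from the stated premise $\del(C\lor(D\land a))$, whereas the paper first derives it via $\axdr$. Your semantic check, which correctly shows that validity of $C$ and $D$ is never used, is an addition the paper does not make.
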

\begin{proof}
Recall that $(C\lor (D\land a))$ derives $\del(C\lor (D\land a))$ by $\axdvi$.
Now we take $\del$, distribute and eliminate $\del$ considering the fact that $C, D$ are valid expressions, as follows

    \begin{center}
    \AxiomC{$(C\lor (D\land a))$}
    \UnaryInfC{$\del(C\lor (D\land a))$} 
  \UnaryInfC{$\del C \lor \del(D\land a)$} 
   \UnaryInfC{$ C \lor( \del D\land  \del a)$} 
   \UnaryInfC{$ C \lor(  D\land  \bot)$} 
\UnaryInfC{$C\lor\bot$}
\UnaryInfC{$C$}
\DisplayProof
\end{center}

\end{proof}



\subsection{Structural Normal Form: Chains}

The key idea is that Gödel interpretations induce linear preorders on formulas. Chains encode these orderings syntactically, allowing us to reduce semantic reasoning to finitely many order types. A fundamental insight is that chains in Gödel logic play the role of constituent of disjunctive normal form (DNF) in classical logic.

The elimination of $\del$ is based on a structural decomposition of valuations into linear orderings.
To understand the elimination of $\del$, it is crucial to analyze its behavior in the context of chain structures. Specifically, we must consider only those chains that do not equate the variable directly or to 1. 

  \begin{definition}\label{d - 6.1}
 A \emph{chain} $C$ 
 over the set of propositional variables $\{X_1,...X_n\}$ is an expression (ordering of these variables relative to the constants \(\top\) and \(\bot\), described by positions of variables)
 \begin{align*}
\bot \vartriangleright_1 X_{\Pi(1)}\land X_{\Pi(1)} \vartriangleright_2
         X_{\Pi(2)}\land \dots  \land X_{\Pi(n)}\vartriangleright_{n+1} \top
        \end{align*}
    where  $\Pi$ is permutation on $[1\dots n]$ and  $\vartriangleright_i\in \{<, \lequiv\}$, 
   with \[X_{\Pi(i)}\vartriangleright_i X_{\Pi(i+1)}\] to be understood as  \[X_{\Pi(i)}\lequiv X_{\Pi(i+1)} \equiv X_{\Pi(i)}\limp X_{\Pi(i+1)} \land X_{\Pi(i+1)}\limp X_{\Pi(i)}\] and \[X_{\Pi(i)} < X_{\Pi(i+1)}\] to be understood as 
\[  (X_{\Pi(i+1)}\limp X_{\Pi(i)})\limp X_{\Pi(i+1)}.\] Furthermore, $\bot$ should not be in the equivalence class of $\top$. We abbreviate chains by 
   \begin{align*}
\bot \vartriangleright_1 X_{\Pi(1)}\vartriangleright_2
        X_{\Pi(2)}\vartriangleright_3\dots \vartriangleright_n X_{\Pi(n)}\vartriangleright_{n+1} \top
        \end{align*}
\end{definition}


\begin{definition}(Syntactic evaluation of a formula)
Let $C$ be a chain on $\{X_1,...X_n\}$ and $A$ be a formula with variables among $\{X_1,...X_n\}$. We have the equivalences 

\begin{align*}
    C\land A(X_i\land X_j) \leftrightarrow C\land A(X_i) \,\,\, \text{if} \,\, \,X_i\rightarrow X_j \,\,\, \text{follows from}\,\,\, C\\
    C\land A(X_i\land X_j) \leftrightarrow C\land A(X_j) \,\,\, \text{else}
\end{align*}

\begin{align*}
    C\land A(X_i\lor X_j) \leftrightarrow C\land A(X_i) \,\,\, \text{if} \,\, \,X_j\rightarrow X_i \,\,\, \text{follows from}\,\,\, C\\
    C\land A(X_i\lor X_j) \leftrightarrow C\land A(X_j) \,\,\, \text{else}
\end{align*}

\begin{align*}
   C\land A(X_i\rightarrow X_j) \leftrightarrow C\land A(\top) \,\,\, \text{if} \,\, \,X_i\rightarrow X_j \,\,\, \text{follows from}\,\,\, C\\
    C\land A(X_i\rightarrow X_j) \leftrightarrow C\land A(X_j) \,\,\, \text{else}
\end{align*}

\end{definition}

\begin{proposition}
    For any chain $C$ on variables $\{X_1,...X_n\}$ and a formula $A$ with variables among $\{X_1,...X_n\}$ 

    \[C\land A \leftrightarrow C\land P\] for some atom $P\subseteq \{X_1,...X_n, \top,\bot\}$.
\end{proposition}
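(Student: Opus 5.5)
I will argue by induction on the construction of $A$, using the syntactic evaluation rules just given as rewriting steps that preserve the equivalence $C\land(\cdot)$. The first thing to fix is the semantic content of a chain. Under any interpretation $\I$ with $\I(C)=1$, the chain prescribes a linear preorder on $\{X_1,\dots,X_n,\bot,\top\}$. In this preorder $\bot$ is the bottom, $\top$ is the top, and $\bot$ and $\top$ are not in the same class. Since $\I(C)=1$ forces each conjunct to take value $1$, each $\lequiv$-link yields equal values. Each $<$-link $(Y\limp X)\limp Y$ yields either $\I(X)<\I(Y)$ or both values equal to $1$; this follows from the semantics of $\limp$ and from Remark~\ref{remark1}.

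Consequently, for any two atoms $P,Q\in\{X_1,\dots,X_n,\top,\bot\}$ exactly one of two things happens. Either $P\limp Q$ follows from $C$, i.e.\ $C\limp(P\limp Q)$ is valid. Or $Q$ lies strictly below $P$ in the chain, in which case $C$ forces $\I(P\limp Q)=\I(Q)$ whenever $\I(C)=1$. This dichotomy is what justifies the case splits in the syntactic evaluation rules.

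Because $C$ need not evaluate to $1$, I will establish the stronger pointwise claim: for every interpretation $\I$, $\I(C\land A)=\I(C\land P)$. This gives validity of $C\land A\lequiv C\land P$.

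\textbf{Induction.}
\begin{itemize}
\item Atoms, including $\top$ and $\bot$, are immediate with $P=A$.
\item For a compound $A=A_1\circ A_2$ with $\circ\in\{\land,\lor,\limp\}$, the induction hypothesis gives atoms $P_1,P_2$ with $C\land A_i\lequiv C\land P_i$. By Lemma~\ref{lemma1} (context closure, with $E$ the context $(\cdot)\circ A_2$ and then $P_1\circ(\cdot)$) and Lemma~\ref{lemma2} (transitivity), I reduce to $C\land(P_1\circ P_2)$. Lemma~\ref{lemma2} is needed to chain the two replacements $C\land(A_1\circ A_2)\leftrightarrow C\land(P_1\circ A_2)\leftrightarrow C\land(P_1\circ P_2)$.
\item Finally, one application of the corresponding evaluation rule collapses $C\land(P_1\circ P_2)$ to $C\land P$ with $P\in\{P_1,P_2,\top\}$.
\end{itemize}

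\textbf{Main obstacle.} The context-closure step is the hard part. Replacement under context is not valid for the bare equivalence; the example before Lemma~\ref{lemma1} shows this failure for $\del$. What holds in the $\del$-free fragment is that the whole expression is under the conjunct $C$. I expect to handle it semantically rather than through Lemma~\ref{lemma1}, by a pointwise argument with two cases.
\begin{itemize}
\item If $\I(C)=1$, the induction hypothesis gives $\I(A_i)=\I(P_i)$ exactly. Then $\I(A_1\circ A_2)=\I(P_1\circ P_2)$ by truth-functionality, and the evaluation rule is correct by the dichotomy above.
\item If $\I(C)=c<1$, the value $\I(C\land A)$ is $\min(c,\I(A))$, and I need $\min(c,\I(A))=\min(c,\I(P))$. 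For this I will strengthen the induction claim to: for every $\I$, $\min(\I(C),\I(A))=\min(\I(C),\I(P))$. This requires checking that truncation at $c$ commutes with $\land,\lor,\limp$. It does for $\land$ and $\lor$. For $\limp$ I will argue that the values of atoms above $c$ are collapsed consistently, because $C$ itself being $c$ means some chain link is violated only at level $\ge c$.
\end{itemize}
The delicate verification will be this $\limp$ case with $\I(C)<1$. If it fails in general, I will fall back to proving the weaker, but sufficient for the subsequent normal form, statement that $C\limp(A\lequiv P)$ is valid, which only needs the $\I(C)=1$ case.
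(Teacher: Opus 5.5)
Your induction on $A$ follows the route the paper takes implicitly. The paper gives no separate proof; it rewrites innermost-first with the syntactic-evaluation equivalences, which are stated directly in the guarded form $C\land A(\cdot)$. Your analysis of interpretations with $\I(C)=1$ is also correct: a $<$-link forces strictness or a collapse to $1$, so $P\limp Q$ takes the value $\I(Q)$ whenever $Q$ lies strictly below $P$.

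The genuine gap is the case $0<\I(C)=c<1$. This case carries all the content of replacement under the guard $C\land$, and you leave it open. If ``truncation'' means $x\mapsto\min(c,x)$, it does not commute with $\limp$: for $c=\tfrac12$ one has $\min(c,0.9\limp 0.8)=\tfrac12$ but $\min(c,0.9)\limp\min(c,0.8)=1$. What holds is only the congruence property, and that does not depend on $C$ at all. The step that does depend on $C$, namely that the collapse $P_1\circ P_2\mapsto P$ remains correct at level $c$, is never isolated.

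Your fallback does not help either. For every $\I$, $\I(C\limp(A\lequiv P))=1$ iff $\min(\I(C),\I(A))=\min(\I(C),\I(P))$. So it is the same statement, not a weaker one. The claim that it ``only needs the $\I(C)=1$ case'' is exactly the missing step. That claim genuinely requires an argument, because it fails once $\del$ is present: $X\limp\del X$ takes value $1$ whenever $\I(X)=1$, yet it is not valid.

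Your remark that the links of $C$ can only be violated at level $\ge c$ is the right intuition. What turns it into a proof is the collapsing map $h_c(x)=1$ for $x\ge c$ and $h_c(x)=x$ for $x<c$. Take $c>0$; for $c=0$ there is nothing to show.

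\begin{itemize}
\item $h_c$ preserves $\bot,\top,\land,\lor$ and $\limp$. For $x\le y$ both sides of $h_c(x\limp y)=h_c(x)\limp h_c(y)$ are $1$ by monotonicity. For $x>y\ge c$ both sides are $1$. For $x>y$ with $y<c$ one has $h_c(x)>y$, so both sides are $y$.
\item Hence $(h_c\circ\I)(B)=h_c(\I(B))$ for every $\del$-free $B$.
\item If $\I(C)\ge c$, every link of $C$ becomes true under $h_c\circ\I$. A link $X\lequiv Y$ of value $\ge c$ means $\I(X)=\I(Y)$ or both values are $\ge c$. A link $(Y\limp X)\limp Y$ of value $\ge c$ means $\I(X)<\I(Y)$ or $\I(Y)\ge c$.
\item So $(h_c\circ\I)(C)=1$, and your $\I(C)=1$ case gives $h_c(\I(A))=h_c(\I(P))$. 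This is equivalent to $\min(c,\I(A))=\min(c,\I(P))$.
\end{itemize}

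This settles the pointwise claim in one stroke. It also makes both the congruence check and the appeal to Lemma~\ref{lemma1} and Lemma~\ref{lemma2} unnecessary. Lemma~\ref{lemma1} is in any case a derivability rule from a premise, not replacement under $C\land$.
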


\begin{proposition}\label{propchain}
Let $A$ be a formula with variables among $\{X_1,...X_n\}$. We have 
$A \leftrightarrow \bigvee C_i$ for some disjunction of chains on $\{X_1,...X_n\}$. 
\end{proposition}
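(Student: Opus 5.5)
The plan is to combine two facts: every interpretation satisfies exactly one chain, and on each chain $A$ collapses to an atom by the preceding proposition. The conclusion is that $A$ is equivalent to the disjunction of those chains $C$ on which $A$ collapses to $\top$.

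\textbf{Step 1 (chains are exhaustive).} I would first show that $\bigvee_C C$, taken over all chains on $\{X_1,\dots,X_n\}$, is valid. Fix an interpretation $\I$ into $V$ and sort the values $\I(X_1),\dots,\I(X_n)$ together with $0$ and $1$. This sorting yields a permutation $\Pi$ and a choice of $\vartriangleright_i\in\{<,\lequiv\}$: we take $\lequiv$ exactly where consecutive values coincide. The resulting chain $C_\I$ has $\I(C_\I)=1$, because:
\begin{itemize}
\item each conjunct $X\lequiv Y$ with equal values evaluates to $1$;
\item each conjunct $X<Y$, i.e.\ $(Y\limp X)\limp Y$, evaluates to $1$ when $\I(X)<\I(Y)$, since then $Y\limp X$ has value $\I(X)<\I(Y)$.
\end{itemize}
Since $0\neq 1$, the requirement that $\bot$ not be in the class of $\top$ holds. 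So $\I(\bigvee_C C)=1$ for every $\I$.

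\textbf{Step 2 (evaluation on a chain).} For each chain $C$, the preceding proposition gives an atom $P_C\in\{X_1,\dots,X_n,\top,\bot\}$ with $C\land A\lequiv C\land P_C$. It is obtained by repeatedly applying the syntactic evaluation rules bottom-up, by induction on $A$. I would also fix the semantic meaning: under any $\I$ with $\I(C)=1$, the order type is the one prescribed by $C$, so $\I(A)=\I(P_C)$. Here I will use that the evaluation rules for $\land,\lor,\limp$ are correct for the order fixed by $C$, and handle atoms $\top,\bot$ in the same way as variables.

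\textbf{Step 3 (assemble).} Let $\mathcal C_A$ be the set of chains with $P_C=\top$. I claim $A\lequiv\bigvee_{C\in\mathcal C_A}C$, with the empty disjunction read as $\bot$. Fix $\I$ and let $C_\I$ be its chain from Step 1.
\begin{itemize}
\item If $C_\I\in\mathcal C_A$, then $\I(A)=1$, and the disjunction also has value $1$.
\item Otherwise $P_{C_\I}$ is $\bot$ or some $X_j$ not in the top class, so $\I(A)=\I(P_{C_\I})<1$.
\end{itemize}

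\textbf{Main obstacle.} The difficulty is that chains are not crisp: $\I(C)$ can be strictly between $0$ and $1$ for chains other than $C_\I$, since $<$ and $\lequiv$ are non-Boolean. So plain $1$/non-$1$ agreement is not enough, and the values must match exactly.

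I would first try to show that the disjunction has value exactly $\I(A)$ when $P_{C_\I}\neq\top$. The upper bound needs, for every $C\in\mathcal C_A$, the inequality $\I(C)\le\I(A)$. This follows from $C\limp A$ being valid, which I would derive from $C\land A\lequiv C\land\top$ in the form $C\limp A$ (the deduction theorem is available without $\del$). For the lower bound I would look for a chain in $\mathcal C_A$ of value $\ge\I(A)$. I would obtain it by merging all variables with value $\ge\I(A)$ into the top class of $C_\I$. This produces a chain of value $\I(A)$ on which $A$ evaluates to $\top$, by the monotone behaviour of $\limp$ in Gödel algebras: the map sending values $\ge\I(A)$ to $1$ is a homomorphism.

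If this exact-value argument resists, the fallback is to read ``$\leftrightarrow$'' as the paper's syntactic equivalence modulo chains. That reading gives $C\land A\lequiv C\land\bigvee\mathcal C_A$ for every $C$, and combining this with Step 1 via Lemma~\ref{lemma2} yields the result.
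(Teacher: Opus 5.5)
Your proof is correct. Up to duplicates it produces the same disjunction as the paper, namely the chains on which $A$ evaluates to $\top$; the paper's collapsed chains are themselves such chains. The route is different, though.

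The paper argues syntactically. Exhaustiveness plus the atom-evaluation proposition give $A\lequiv\bigvee_k(C_k\land P_k)$ over all chains. Each disjunct is then rewritten: $P_k=\top$ is dropped, disjuncts with $P_k=\bot$ are deleted, and $C_k\land X_i$ is replaced by the chain in which $X_i$ and everything above it is merged with $\top$. That last equivalence is only asserted.

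You instead fix $\bigvee\mathcal C_A$ in advance and verify $\I(A)=\I(\bigvee\mathcal C_A)$ pointwise. The upper bound comes from $C\land A\lequiv C$ for $C\in\mathcal C_A$. The same collapsed chain serves as the witness for the lower bound, and the truncation homomorphism $h_t$ (values $\ge t$ sent to $1$) shows that it lies in $\mathcal C_A$. So your argument supplies the justification the paper omits. The paper's proof is shorter and stays inside its rewriting calculus.

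Two side remarks.

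First, the paper defines $X<Y$ as $(Y\limp X)\limp Y$, which has value $1$ whenever $Y$ does. Consequently an interpretation can satisfy several chains, and $\I(C)=1$ does not fix the order type of $C$. For example, $\I(X)=1$ satisfies both $\bot<X<\top$ and $\bot<X\lequiv\top$. So the ``exactly one chain'' in your plan and the order-type claim in Step 2 are false. You never use them, however: Step 3 only needs the particular chain $C_\I$, and $\I(A)=\I(P_C)$ whenever $\I(C)=1$ follows directly from the validity of $C\land A\lequiv C\land P_C$.

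Second, the obstacle is milder than you suggest. Your own $h_t$ shows that two $\del$-free formulas $D,E$ are equivalent as soon as they take value $1$ under the same interpretations. If $\I(D)>\I(E)$, the interpretation $h_{\I(D)}\circ\I$ gives $D$ value $1$ and $E$ value $\I(E)<1$. Your two bullets in Step 3, together with the upper bound $\I(C)\le\I(A)$ for $C\in\mathcal C_A$, already give exactly this $1$-agreement. Hence the lower-bound construction and the fallback via Lemma~\ref{lemma2} can be dropped.
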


\begin{proof}
$A \leftrightarrow \bigvee C_k\land P_k$ where $C_k$ are all chains and $P_k$ are  the atoms under syntactic valuation.
If $P_k$ is $\top$ we delete $P_k$. 
If $P_k$ is $\bot$ we delete $C_k\land P_k$.

If $P_k$ is $X_i$ note that 

\[\bot \vartriangleright_1 X_{\Pi(1)}\vartriangleright_2
        X_{\Pi(2)}\vartriangleright_3\dots \vartriangleright_n X_{\Pi(n)}\vartriangleright_{n+1} \top\]

is  equivalent to the "smaller" chain where all variables above $X_i$ are equivalent to $\top$.

\end{proof}

\begin{definition}
We define a restricted chain $C^-$ as a chain where there is no variable in the equivalence class of $\top$. 
\end{definition}

\begin{definition}
The syntactic evaluation of restricted chains on $\{X_1,...X_n\}$ of formula with variables among $\{X_1,...X_n\}$, possibly containing $\del$ is defined as
above with the addition that 
\[C\land \del (\bot) \leftrightarrow \bot\]
\[C\land \del (X_i) \leftrightarrow \bot\]
\[C\land \del (\top) \leftrightarrow \top\]
\end{definition}

\begin{proposition}
   Let $A$ be a formula possibly with $\del$ with variables among $\{X_1,...X_n\}$. We have 
$A \leftrightarrow \bigvee C_k^-$ for some disjunction of restricted chains on $\{X_1,...X_n\}$.  
\end{proposition}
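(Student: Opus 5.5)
The plan is to mirror the proof of Proposition~\ref{propchain}, now working in the restricted semantics and allowing $\del$ in $A$. The equivalence $A \lequiv \bigvee C_k^-$ is to be read as valid in $G^-_V$.

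\textbf{Step 1: reduce to restricted chains.} Under restricted semantics every interpretation $\I$ with $\I(X_i)<1$ for all $i$ makes exactly one chain true, namely the one describing the order type of $\I$ on $\{X_1,\dots,X_n\}$ relative to $\bot,\top$. That chain is restricted, since no variable sits in the class of $\top$. Conversely, every non-restricted chain evaluates to a value $<1$ under such an $\I$. Hence $\bigvee_k C_k^-$, taken over all restricted chains, is valid in $G^-_V$. We therefore have $A \lequiv \bigvee_k (C_k^- \land A)$ in the restricted semantics, using distributivity and the fact that, modulo $\bigvee_k C_k^-$, $A$ is equivalent to this disjunction.

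\textbf{Step 2: syntactic evaluation with $\del$.} Fix a restricted chain $C^-$. We show by induction on $A$ that $C^-\land A \lequiv C^-\land P$ for some $P\in\{X_1,\dots,X_n,\top,\bot\}$. We work innermost-first, using the evaluation rules for $\land,\lor,\limp$ together with the three new $\del$-rules. For the $\del$-rules:
\begin{itemize}
\item $C^-\land\del X_i\lequiv\bot$ is sound because any interpretation making $C^-$ positive has $\I(X_i)<1$, so $\del X_i$ evaluates to $0$.
\item $\del\top\lequiv\top$ and $\del\bot\lequiv\bot$ are immediate.
\end{itemize}

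The main obstacle is the failure of the equivalence principle under $\del$, as shown in the example preceding Lemma~\ref{lemma1}: knowing $C^-\land B\lequiv C^-\land P$ does not directly license replacing $B$ by $P$ under $\del$. I would handle this as follows.
\begin{itemize}
\item Treat all equivalences semantically, pointwise in interpretations where $C^-$ holds. When $C^-$ holds, it takes value $1$ if the chain is exactly realised, and otherwise some value $<1$. Within the set of interpretations realising the order type of $C^-$, $B$ and $P$ take identical values, since $P$ is computed from the order type alone. So $\del B$ and $\del P$ coincide there.
\item Outside that set, $C^-$ evaluates to a value not exceeding the relevant atoms in a way that keeps $C^-\land(\cdot)$ equal on both sides. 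This is exactly what the syntactic evaluation already guarantees for the $\del$-free case.
\item Alternatively, and closer to the paper's style, invoke Lemma~\ref{lemma1} (context closure, which derives $\del C\lequiv\del D$ and then $E(C)\lequiv E(D)$) and Lemma~\ref{lemma2} (transitivity) to propagate the local equivalences through the $\del$-contexts, one innermost $\del$ at a time.
\end{itemize}

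\textbf{Step 3: collapse atoms.} We now have $A\lequiv\bigvee_k(C_k^-\land P_k)$, and we finish as in Proposition~\ref{propchain}:
\begin{itemize}
\item If $P_k=\top$, drop $P_k$.
\item If $P_k=\bot$, drop the disjunct.
\item If $P_k=X_i$, the disjunct $C_k^-\land X_i$ must be rewritten as chains. In the unrestricted case it became the chain with everything above $X_i$ collapsed to $\top$, but that chain is no longer restricted. However, in restricted semantics $C_k^-\land X_i$ is an honest disjunct whose value is $<1$. So I would argue that, since validity-relevant behaviour is governed by the order type, $C_k^-\land X_i$ can be expressed as a $\del$-free combination equivalent to $C_k^-\land X_i$ itself.
\end{itemize}
I suspect the intended reading of the proposition permits disjuncts of the form $C^-_k\land P_k$, so that the result is a $\del$-free normal form and $\del$ is eliminated. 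This bookkeeping for atoms $X_i$ is the step I would double-check most carefully.
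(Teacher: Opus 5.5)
\textbf{Gap: under your reading the statement is false, and Step~2 fails.} The break is in the second bullet of Step~2, together with the appeal to Lemma~\ref{lemma1}. You fix at the outset the reading ``$A\lequiv\bigvee_k C_k^-$ is valid in $G^-_V$''. Under that reading the step cannot be repaired, because the claim itself is false.

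Off their own order type, chains are not crisp, and $\del$ turns a small discrepancy into a jump from $0$ to a positive value. Take $C^-=\bot<X_1<X_2<\top$ and $B=X_1\limp X_2$, so the syntactic value is $P=\top$. Let $\I(X_1)=0.6$ and $\I(X_2)=0.3$.
\begin{itemize}
\item The conjunct $(X_2\limp X_1)\limp X_2$ has value $0.3$, so $\I(C^-)=0.3$.
\item Then $\I(C^-\land\del B)=0$, while $\I(C^-\land\del\top)=0.3$.
\end{itemize}
Lemma~\ref{lemma1} does not help. It is a rule relying on $A\proves\del A$, so it only transports the value $1$ and yields no valid biconditional; the deduction theorem fails with $\del$.

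In fact no $\del$-free formula at all is logically equivalent to $\del(X_1\limp X_2)$. This formula has value $1$ at $(0.5,0.5)$ and value $0$ at $(0.6,0.5)$. For every $\del$-free $F$, the value of $\neg\neg F$ depends only on which variables are $0$, since double negation commutes with $\land,\lor,\limp$ in Gödel logic. So $F$ vanishes at both points or at neither. Hence neither a disjunction of restricted chains nor your fallback $\bigvee_k(C_k^-\land P_k)$ can be equivalent to it.

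Your Step~3 worry is also justified for the literal statement. For $A=X_1$, consider valuations with $\I(X_1)=a\in(0,1)$ and all other variables $0$. Every restricted chain is $\{0,1\}$-valued there, so no disjunction of restricted chains is equivalent to $X_1$.

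\textbf{How the paper handles $P_k=X_i$.} The paper's proof does not keep $C_k^-\land X_i$; it deletes it, exactly as in the $\bot$ case. This is sound only for the weaker notion of equivalence in which the rule-based Lemmas~\ref{lemma1}--\ref{lemma3} operate. In that notion, $A$ and $\bigvee_k C_k^-$ take the value $1$ at exactly the same restricted valuations; equivalently, $\del A\lequiv\del\bigvee_k C_k^-$ is valid in $G^-_V$.

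Under that reading your proof closes quickly once you drop the second and third bullets of Step~2:
\begin{itemize}
\item A restricted valuation $\I$ gives value $1$ exactly to the restricted chain(s) describing its order type, and a value $<1$ to all others.
\item On that chain your first bullet gives $\I(A)=\I(P_k)$, with $\del X_i$ evaluating to $0$. So $\I(A)=1$ iff $P_k=\top$.
\item $C_k^-\land X_i$ is always $<1$ under restricted semantics. Deleting it, as the paper does, therefore does not change where the disjunction attains $1$.
\end{itemize}
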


\begin{proof}
    The proof is similar to the proof of Proposition \ref{propchain} and using Lemma \ref{lemma1}, Lemma \ref{lemma2} and Lemma \ref{lemma3}. By the construction above, we obtain that  $A \leftrightarrow \bigvee C_k^-\land P_k$ for all restricted chains,
where $P_k\subseteq \{X_1,...X_n, \top,\bot\}$. If $P_k$ is $\top$ we delete $P_k$. 
If $P_k$ is $\bot$ we delete $C_k\land P_k$. If $P_k$ is $X_i$ we delete $C_k\land P_k$.  
\end{proof}

As analogue of the Law of Excluded Middle, we obtain
\begin{proposition}
In standard semantics, the full disjunction of chains $\lOr C$ on $\{X_1,...X_n\}$ is valid in all Gödel logics. Similarly, the disjunction of restricted chains $\lOr C^-$ on  $\{X_1,...X_n\}$ is valid in all Gödel logics under the restricted semantics.
\end{proposition}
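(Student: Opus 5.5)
The plan is a semantic argument: every interpretation makes exactly one chain (as an order type) take value $1$. Fix an arbitrary interpretation $\I$ into a Gödel set $V$ and let $X_1,\dots,X_n$ be the variables. The values $\I(X_1),\dots,\I(X_n)$ together with $0=\I(\bot)$ and $1=\I(\top)$ are linearly ordered. Choose a permutation $\Pi$ listing the variables in non-decreasing order of value. Set $\vartriangleright_i$ to be $\lequiv$ if the two neighbouring entries have equal value, and $<$ otherwise. The boundary slots $\bot\vartriangleright_1 X_{\Pi(1)}$ and $X_{\Pi(n)}\vartriangleright_{n+1}\top$ are treated the same way. This yields a chain $C_\I$. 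Since $0\neq 1$, $\bot$ is never in the equivalence class of $\top$, so $C_\I$ is a legitimate chain in the sense of Definition \ref{d - 6.1}.

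The main step is to check that $\I(C_\I)=1$, conjunct by conjunct.
\begin{itemize}[nosep]
\item For a conjunct $Y\lequiv Z$ with $\I(Y)=\I(Z)$, both implications evaluate to $1$.
\item For a conjunct $Y<Z$, i.e.\ $(Z\limp Y)\limp Z$ with $\I(Y)<\I(Z)$: we get $\I(Z\limp Y)=\I(Y)<\I(Z)$, so the outer implication evaluates to $1$.
\end{itemize}
Hence $\I(\lOr C)\geq \I(C_\I)=1$. Since $\I$ was arbitrary, $\lOr C$ is valid in every $G_V$.

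For the restricted part, let $\I$ be restricted, so $\I(X_i)<1$ for all $i$. Then in $C_\I$ the last relation $\vartriangleright_{n+1}$ is $<$, and no variable lies in the equivalence class of $\top$. Thus $C_\I$ is a restricted chain $C^-$, and the same computation gives $\I(\lOr C^-)=1$. The presence of $\del$ does not matter here, since the chains contain no $\del$.

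I expect no real obstacle. The one delicate point is the reading of $A<B$ from Remark \ref{remark1}, namely $(B\limp A)\limp B$. This formula also evaluates to $1$ when both arguments are $1$, so a chain may be true in more interpretations than "intended". For validity this only helps. What must be ensured is that the chain picked out by the actual ordering is true, and the computation above covers that in all cases, including the boundary comparisons with $\bot$ and $\top$.
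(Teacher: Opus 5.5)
Your proposal is correct and follows the same route as the paper: each interpretation induces a total preorder on the values, the chain matching that preorder evaluates to $1$, and in the restricted case that chain has no variable in the class of $\top$. You also carry out the conjunct-by-conjunct check (in particular for $(Z\limp Y)\limp Z$) that the paper only asserts; the one inaccuracy is your opening claim that \emph{exactly} one chain is true, which your last paragraph itself refutes (e.g.\ $\bot<X<\top$ and $\bot<X\lequiv\top$ both take value $1$ when $\I(X)=1$), but the argument only uses existence, so nothing breaks.
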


\begin{proof}
    Let $\{X_1,...X_n\}$ be a finite set of propositional variables. Any valuation \( v: X \rightarrow [0,1] \) in Gödel logic induces a total preorder on the set of variables, defined by \( X_i \leq X_j \) whenever \( v(X_i) \leq v(X_j) \).

Each such total preorder corresponds to a particular chain \( C \), determined by a permutation \( \Pi \) and a choice of relations \( \vartriangleright_i \in \{<, \lequiv\} \) reflecting strict or non-strict orderings in the valuation. Since the set of permutations of \( [1,\dots,n] \) and relation choices is finite, the disjunction over all such chains \( \bigvee C \) covers all possible total preorders, and hence, all possible valuations.
Therefore, for any Gödel model and any valuation \( v \), at least one chain \( C \) evaluates to \(1\). Hence, \( \bigvee C \) is valid in every Gödel logic.

The same reasoning applies under restricted semantics: even when variable assignments are restricted to values strictly less than \(1\), every assignment still induces a total preorder on the variables (possibly omitting the value \(1\)). The corresponding chains \( C^- \), which do not include equalities to \( \top \), are constructed accordingly. Again, at least one chain \( C^- \) evaluates to \(1\), so \( \bigvee C^- \) is valid in all Gödel logics under restricted semantics.
\end{proof}

\begin{remark}
    In the finite-valued Gödel logics with $n$ truth values, chains have length of $n$. In the case of infinite-valued Gödel logics, one gets chains with the number of variables incremented by 2 additional positions for truth constants. 
\end{remark}

Below we just consider an example of a chain normal form for a formula in two variables without $\del$ in standard semantics.

\begin{example}
Let us take the two-variable formula $ (A\rightarrow B)\rightarrow B$ and construct a chain normal form. For overview, we construct the chain step by step.
We take one variable $A$ and take every position for its evaluation
\begin{align*}
    (\bot\lequiv A)< \top\\
    \bot<A<\top\\
    \bot<\top\lequiv A
    \end{align*}
 Now we insert the second variable $B$ in all possible positions given above. Now we list all possible positions for $A$ and $B$;  In the first row we insert the chains in the normal form \footnote{Evaluations of the formula $ (A\rightarrow B)\rightarrow B$  are taken for each position of the variable assigned with the $*$ respectively.}. In the second row, we calculate the first chains by applying the rule that $B=1$ everything above $B$ (inclusive) is $1$.
 \begin{align*}
\mathrm{(1)} & \quad ((\bot\lequiv B^*\lequiv A)<\top\land  \textcolor{red}{B}) \lor & \qquad -  \\
\mathrm{(2)} & \quad     ((\bot\lequiv A)<B^*<\top\land \textcolor{red}{B})\lor & \qquad \mathrm{(1')} & \quad ((\bot \lequiv A)<(B \lequiv \top)) \lor \\
\mathrm{(3)} & \quad ((\bot\lequiv A)<(B^*\lequiv\top) \land \textcolor{red}{B})\lor  & \qquad \mathrm{(2')} & \quad ((\bot \lequiv A)<(B \lequiv \top))\lor \\
\mathrm{(4)} & \quad     ((\bot\lequiv B)<A^*<\top\land \textcolor{red}{ \top})\lor  & \qquad \mathrm{(3')} & \quad ((\bot \lequiv B)<(A < \top))\lor\\
\mathrm{(5)} & \quad     (\bot<B< A^*<\top\land \textcolor{red}{\top})\lor  & \qquad \mathrm{(4')} & \quad (\bot <B<(A < \top))\lor\\
\mathrm{(6)} & \quad     (\bot< A\lequiv B^*<\top \land \textcolor{red}{B})\lor & \qquad \mathrm{(5')} & \quad (\bot < A \lequiv B \lequiv \top) \lor\\
\mathrm{(7)} & \quad     (\bot<A< B^*<\top\land \textcolor{red}{B})\lor & \qquad \mathrm{(6')} & \quad (\bot < A < B \lequiv \top) \lor\\     
\mathrm{(8)} & \quad     (\bot<A< B^*\lequiv\top\land \textcolor{red}{B})\lor  & \qquad \mathrm{(7')} & \quad (\bot < A<B \lequiv \top)) \lor\\
\mathrm{(9)} & \quad     ((\bot\lequiv B)< (A^*\lequiv\top)\land \textcolor{red}{\top})\lor  & \qquad \mathrm{(8')} & \quad ((\bot \lequiv B)<(A\lequiv \top)) \lor\\
\mathrm{(10)} & \quad    ( (\bot< B)< (A^*\lequiv\top)\land   \textcolor{red}{ \top})\lor & \qquad \mathrm{(9')} & \quad ((\bot < B)<(A \lequiv \top)) \lor\\
\mathrm{(11)} & \quad     (\bot<(A\lequiv B^*\lequiv\top)\land \textcolor{red}{B}) & \qquad \mathrm{(10')} & \quad (\bot < A\lequiv B \lequiv \top) 
 \end{align*}
 now we write the optimization of the chains $(1')-(10')$ by removing the copies
 \begin{align*}
 \mathrm{(1)} & \quad ( (\bot \lequiv A)<(B \lequiv \top)) \lor  \\
 \mathrm{(2)} & \quad    ((\bot \lequiv B)<(A < \top)) \lor \\
\mathrm{(3)} & \quad     (\bot <B<A < \top)) \lor\\
\mathrm{(4)} & \quad    ( \bot < A \lequiv B \lequiv \top) \lor \\
\mathrm{(5)} & \quad     (\bot < A < B \lequiv \top) )\lor \\     
\mathrm{(6)} & \quad     ((\bot \lequiv B)<(A\lequiv \top) )\lor\\
\mathrm{(7)} & \quad      (\bot < B<A \lequiv \top) 
 \end{align*}
 
\end{example}

To show how the condition 3 in the definition above works lets illustrate the following example.

\begin{example} 
    Consider a formula \[F:= \del (A\lor A\limp \bot).\]
Take every position for the variable evaluation
\begin{align*}
    (\bot\lequiv A)< \top\\
    \bot<A<\top
    \end{align*}
We have the following disjunction of restricted chains:
    \begin{center}
        $(\bot\lequiv A)< \top \lor (\bot<A<\top)$.
    \end{center}

Now we construct the chain normal form in the restricted semantics
  \begin{center}
        $F\land (\bot \lequiv A)< \top \lor F\land (\bot < A)<\top\equiv \bot \lequiv A< \top \lor (\bot < A<\top) \land A$. 
    \end{center}
    
Note that we evaluate $A\lor A\limp \bot$ from innermost connective first in the formula $F$.
Evaluation of the first chain is as follows
\begin{align*}
    A\lor A\limp \bot\\
        A\lor \top\\
    \top
\end{align*}
    Evaluation of the second chain is as follows
\begin{align*}
    A\lor A\limp \bot\\
        A\lor \bot\\
   A\\
  \bot
\end{align*}

Now we distribute $\del$ \[\del (\bot \lequiv A< \top) \lor \del ( (\bot < A<\top) \land A).\] We can remove $\del ( (\bot < A<\top) \land A)$ and in the conjunction $\del A$ is $0$ and apply the axiom $\axdiii$ we obtain 

\[\bot \lequiv A< \top. \]

\end{example}

\begin{remark}
There might be copies of chains, therefore we restrict definition to only one chain of the specific type. The chains are only distinguished by the order of the variables in the equivalences and correspond to the disjunctive normal form of classical logic.
\end{remark}

\begin{remark}
 Expressions of the form
$\bot < a < c$
are to be understood as abbreviations for conjunctions of pairwise inequalities. More precisely, we define:
\[
\bot < a < c \;:=\; (\bot < a) \land (a < c).
\]

Similarly, longer chains are interpreted conjunctively. For example:
\[
\bot < a < b < \top \;:=\; (\bot < a) \land (a < b) \land (b < \top).
\]
Without this explicit convention, such expressions would be ambiguous.
\end{remark}


\begin{remark}
    We have three different forms of chains in Gödel logics: chains in standard semantics with and without $\del$ and the restricted chains.
\end{remark}


By reformulating formulas into chain normal forms, we ensure that $\del$ can be systematically removed while preserving the validity of equivalence. The final form is a disjunction of chains without $\del$, which evaluates to 1.

\begin{corollary}\label{l - 6.1}
In the restricted semantics each formula $F$ with $\del$ is equivalent to disjunction of chains without $\del$.

\end{corollary}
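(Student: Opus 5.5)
The plan is to derive the corollary from the last proposition on restricted chains: every formula $A$ possibly containing $\del$, with variables among $\{X_1,\dots,X_n\}$, satisfies $A\lequiv\bigvee C_k^-$ for some set of restricted chains. The one thing still to check is that a restricted chain contains no $\del$. By Definition~\ref{d - 6.1}, a chain is built from $X_{\Pi(i)}$, $\bot$, $\top$ using only $\land$, $\limp$ and the abbreviations $\lequiv$ and $<$. So any disjunction of restricted chains is $\del$-free, and the corollary follows once the equivalence holds in the restricted semantics.

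I will carry this out in three steps.

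\textbf{Step 1.} Cover all valuations. By the restricted analogue of excluded middle, $\bigvee C^-$ over all restricted chains is valid under restricted semantics. Hence every restricted valuation $\I$ satisfies exactly one chain $C^-$, namely the one encoding the preorder induced by $\I$. In that chain no variable is equivalent to $\top$, since $\I(X_i)<1$.

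\textbf{Step 2.} Evaluate $F$ chain by chain, proceeding from innermost subformulas outward. The syntactic evaluation reduces each $\del$-free subformula to an atom $P\in\{X_1,\dots,X_n,\top,\bot\}$. Each subformula $\del G$ is then replaced by $\del P$, using Lemma~\ref{lemma1} (context closure) together with Lemma~\ref{lemma2} (transitivity), and is evaluated by the clauses $\del\bot\leftrightarrow\bot$, $\del X_i\leftrightarrow\bot$, $\del\top\leftrightarrow\top$. The clause $\del X_i\leftrightarrow\bot$ is sound exactly because the chain is restricted. Lemma~\ref{lemma3} handles the case where $\del$ is applied to a disjunction of the form $C\lor(D\land a)$ that arises after distributing. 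By induction on the nesting depth of $\del$, we obtain $C^-\land F\leftrightarrow C^-\land P_k$.

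\textbf{Step 3.} Assemble the result. We get $F\leftrightarrow\bigvee_k (C^-_k\land P_k)$. Drop the disjuncts with $P_k=\bot$ or $P_k=X_i$, and replace $C_k^-\land\top$ by $C_k^-$.

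I expect the main obstacle to be Step 2. The equivalence principle fails in the presence of $\del$, so we cannot substitute equivalents inside $\del$ directly. The fix I would try first is to work with the stronger premise $\del(C^-\to(G\lequiv P))$, which does license substitution under $\del$. This premise is available because $C^-$ is $\{0,1\}$-valued on the relevant valuations: a chain evaluates to $1$ under the matching valuation.

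A second subtle point is dropping the disjuncts $C_k^-\land X_i$. This is justified only semantically, via validity, since $X_i<1$. Pointwise, $\I(C_k^-\land X_i)=\I(X_i)$ may be nonzero. So the claimed equivalence should be read as in Corollary~\ref{l - 6.1}: equivalence of values on $\{0,1\}$ for formulas whose outermost operator is $\del$, and in general validity equivalence. I would state this reading explicitly and check it against the example $\del(A\lor A\limp\bot)$ given above.
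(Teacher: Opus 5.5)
You follow the paper's own route. The corollary is read off from the restricted-chain proposition, and that proposition comes from syntactic evaluation on each restricted chain (Lemmas 1--3), followed by deleting the disjuncts $C^-_k\land\bot$ and $C^-_k\land X_i$. There is a genuine gap, and your fix for Step~2 does not close it, because restricted chains are \emph{not} $\{0,1\}$-valued. Recall that $X_i<X_j$ abbreviates $(X_j\limp X_i)\limp X_j$, whose value is $\I(X_j)$ whenever $\I(X_i)\ge\I(X_j)$. Take $F=\del(X\limp Y)$ and the restricted chain $C^-=\bot<X<Y<\top$, under which $F$ evaluates to $P=\top$. At $\I(X)=1/2$, $\I(Y)=1/3$ you get $\I(C^-)=1/3$ but $\I(F)=0$. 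So neither $C^-\land F\lequiv C^-\land P$ nor $\del(C^-\limp(F\lequiv P))$ is valid. What Lemma~\ref{lemma1} can push through $\del$-contexts is only the validity of formulas like $\bigvee_{j\ne k}C^-_j\lor(C^-_k\land(G\lequiv P_k))$. That gives agreement of $F$ with $P_k$ where $C^-_k$ has value $1$, not a pointwise biconditional. The same valuation also breaks Step~3: $\bigvee_k(C^-_k\land P_k)$ has value at least $1/3$ there, even though for this $F$ every $P_k$ is $\top$ or $\bot$. So the dropped $X_i$-disjuncts you flagged are not the only obstruction, and restricting to formulas with outermost $\del$ does not help.

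Read as logical equivalence in $G^-_V$, which is the reading the paper's introduction claims, the corollary is false, so no completion of this argument (yours or the paper's) can prove it. The map sending $a$ to $1$ if $a>0$ and to $0$ if $a=0$ commutes with $\land,\lor,\limp,\bot,\top$. Hence any $\del$-free $D(X,Y)$ is either $0$ at both restricted valuations $(1/3,1/2)$ and $(1/2,1/3)$ or positive at both. But $\del(X\limp Y)$ takes the values $1$ and $0$ there. On the other hand, your fallback ``validity equivalence'' is vacuous: every propositional formula is validity-equivalent to $\top$ or to the empty disjunction.

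What the chain method actually yields is this. With $D:=\bigvee\{C^-_k : P_k=\top\}$, for every restricted $\I$ we have $\I(F)=1$ iff $\I(D)=1$, i.e.\ $\del F\lequiv\del D$ is valid in $G^-_V$. The reason is that $\I(X_j)<1$ makes $\I(X_i<X_j)=1$ iff $\I(X_i)<\I(X_j)$, and $X<\top$ always has value $1$. So a restricted chain takes value $1$ exactly at the valuations inducing it, each restricted valuation induces exactly one such chain, and $P_k=X_i$ forces a value below $1$. The paper's example $a\lor\del(a\lor\neg a)$ versus $\neg a$ is of exactly this weaker kind: at $0<a<1$ the first has value $a$ and the second has value $0$.
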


\begin{remark}\label{remark13}
       The above proposition is valid even when there are free variables present in atoms.
 
\end{remark}

\begin{example}[Applying the $\del$-Elimination Process]
Consider a formula:
\begin{center}
    $F:= a\lor \del(a\lor \neg a)$.
\end{center}
The corresponding chain decomposition yields three chains in standard semantics:
\begin{align*}
    (\bot \lequiv a)< \top \\
    (\bot <a)<\top\\
    (\bot < 1\lequiv a)
\end{align*}
for some variable $a$.
Note that by definition the last chain does not occur in the restricted semantics. 
We retain only the occurring chains, resulting in the following simplified disjunction of chains in the restricted semantics:
    \begin{center}
        $(\bot \lequiv a)< \top \lor (\bot < a)<\top$.
    \end{center}
Now we construct the chain normal form in the restricted semantics
  \begin{center}
        $F\land ((\bot \lequiv a)< \top \lor (\bot < a)<\top)$.
    \end{center}
    We distribute $F$
\begin{center}
    $(\bot \lequiv a)< \top \land F \lor (\bot < a)<\top \land F$. 
\end{center}
Note that we evaluate $a\lor \del(a\lor \neg a)$ from innermost chain first.
Evaluation of the first chain is as follows
\begin{align*}
    a\lor \del (a\lor \top)\\
    a\lor \del \top\\
    a\lor \top\\
    \top
\end{align*}
    Evaluation of the second chain is as follows
\begin{align*}
    a\lor \del (a\lor \bot)\\
    a\lor \del a\\
    a\lor \bot\\
   a\\
  \bot
\end{align*}
Thus,      
        $F \lequiv (\bot \lequiv a)$ \text{ or written shorter } $F \lequiv \neg a$.
\end{example}

    By reformulating formulas into chain normal forms, we ensure that $\del$ can be systematically removed while preserving logical validity. The final form is a disjunction of chains without $\del$, which evaluates to 1. This example also illustrates the fact that the restricted semantics is not closed under substitution. Assume we substitute $\top$ for $a$, we obtain 
\[
\top\lor \del (\top\lor \top \limp \bot) \lequiv \neg \top
\]
and consequently $\top\lequiv \bot$.

\subsection{Standard VS Restricted Semantics}


The following proposition says that standard and restricted semantics coincide at the propositional level without $\del$. The reason it works is that, without $\del$, formulas cannot distinguish between “exactly 1” and “arbitrarily close to 1” in the way $\del$ can. So if a counterexample uses some variables at value 1, one can modify the corresponding chain description and obtain a counterexample in the restricted setting.

\begin{proposition}\label{equivalence}
Let $A$ be a propositional formula in the language without $\Delta$.
Then \[G_{[0,1]}\models A \Leftrightarrow G_{[0,1]}^-\models A.\]
\end{proposition}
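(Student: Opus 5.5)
The direction $G_{[0,1]}\models A \Rightarrow G_{[0,1]}^-\models A$ is immediate. Every restricted interpretation (all atoms strictly below $1$, $\top$ still evaluated to $1$) is in particular an interpretation in the standard sense. So if $A$ takes value $1$ under all standard interpretations, it does so under all restricted ones. The content of the proposition lies in the converse, which I would prove contrapositively: from a standard countermodel $\I$ with $\I(A)<1$, I construct a restricted countermodel $\I'$ with $\I'(A)<1$.

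The construction uses the fact, noted in the preliminaries, that for $\del$-free formulas the value depends only on the relative order of the atom values together with $0$ and $1$. Concretely, I would show that every $\del$-free formula commutes with order-preserving maps fixing $0$ and $1$. Let $h:[0,1]\to[0,1]$ be strictly increasing with $h(0)=0$ and $h(1)=1$. Then for every $\del$-free propositional $A$,
\[
h(\I(A)) = \I_h(A),
\]
where $\I_h$ is defined by $\I_h(p)=h(\I(p))$ on variables. This follows by induction on $A$. For $\bot$ and $\top$ it holds because $h$ fixes $0$ and $1$. For $\land$ and $\lor$ it holds because a strictly monotone $h$ commutes with $\min$ and $\max$. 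For $\limp$ it holds because strict monotonicity preserves the case distinction $\I(B)<\I(C)$ versus $\I(B)\ge\I(C)$, and the two outcomes are $1=h(1)$ and $h(\I(C))$.

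However, a map fixing $1$ does not move atoms away from $1$. The real construction must therefore collapse the top. Let $1>\varepsilon>\max\{\I(p):\I(p)<1\}$, with $\varepsilon$ chosen above every atom value below $1$; there are only finitely many such values. Define $\I'(p)=\I(p)$ if $\I(p)<1$ and $\I'(p)=\varepsilon$ otherwise. I would prove by induction a key claim: for every subformula $B$ of $A$, either $\I(B)<1$ and $\I'(B)=\I(B)$, or $\I(B)=1$ and $\I'(B)\in\{\varepsilon,1\}$. In other words, the map $v\mapsto v$ on $[0,\varepsilon)$ together with $\{\varepsilon,1\}\mapsto 1$ is a homomorphism from the values under $\I'$ to those under $\I$.

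The induction rests on the fact that the collapse map $g$ ($g(v)=v$ for $v<\varepsilon$, $g(v)=1$ for $v\ge\varepsilon$), restricted to the finitely many values occurring, is monotone and fixes $0$ and $1$. Such a map commutes with $\min$ and $\max$, and it commutes with Gödel implication. To check the implication case, suppose $\I'(B)\le \I'(C)$. Then both sides give $1$. Suppose instead $\I'(B)>\I'(C)$. Then $\I'(B\limp C)=\I'(C)$, while $g(\I'(B))\ge g(\I'(C))$ gives either value $g(\I'(C))$ or value $1$. The latter arises only when both values are at least $\varepsilon$, and in that case $g(\I'(C))=1$ anyway. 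With this established, $\I(A)=g(\I'(A))$, so $\I(A)<1$ forces $\I'(A)=\I(A)<1$. Since $\I'$ is restricted, it is the required countermodel.

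The main obstacle is the implication case of this homomorphism argument. It must be checked carefully that collapsing the interval $[\varepsilon,1]$ to $1$ commutes with $\limp$. This is where the absence of $\del$ is essential: $\del$ is not preserved by $g$, since $\del\varepsilon=0$ but $g(\varepsilon)=1$. The counterexample $\del p$, which is not valid but holds in no model that is trivial at the top, illustrates why the statement must exclude $\del$. Alternatively, the same argument can be phrased via Proposition~\ref{propchain}: a chain in which variables are equivalent to $\top$ is replaced by the restricted chain placing them in a fresh class just below $\top$. I would present the semantic version as cleaner.
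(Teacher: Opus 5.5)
Your argument is correct and is essentially the paper's proof in semantic form: the paper takes a chain $C$ under which $A$ evaluates to some $X$ below $\top$, and moves the class of variables equivalent to $\top$ to a fresh class just below $\top$, which is exactly your $\I'$ with the value $\varepsilon$. Your collapse map $g$ justifies the paper's unproved assertion that the evaluation under the modified chain $C'$ is still $X$; the key point is your check that $g$ commutes with $\limp$ because it identifies distinct values only at the top (a small slip in your side remark: the formula that separates the two semantics once $\del$ is allowed is $\neg\del p$, not $\del p$).
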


\begin{proof}
One direction is trivial: if $A$ is valid in the ordinary semantics, then it is valid in the restricted semantics, since the restricted semantics allows fewer valuations/chains.

For the converse, assume that $A$ is not valid in the standard semantics. 

Then there is a chain $C$ 
such that $C\land A \lequiv C\land X$, where $X$ is not in the class of $\top$ are valid in $G_{[0,1]}$. 

So suppose that some variables $Y_1\dots Y_n\}$ are equivalent to $\top$.
\[C:= \bot \vartriangleright_1 \dots \vartriangleright_k Y_1\leftrightarrow Y_2 \dots \leftrightarrow Y_n\leftrightarrow \top.\]

We now modify $C$  to
\[C':= \bot \vartriangleright_1 \dots \vartriangleright_k Y_1\leftrightarrow Y_2 \dots \leftrightarrow Y_n < \top.\] The evaluation is again of the form $C' \land X$. This leads to a counterexample in restricted semantics as $C'$ is restricted chain. 
\end{proof}

We have the following immediate corollaries:

\begin{corollary}(Reduction to restricted chains)
Let $A$ be a propositional formula without $\del$. Then $A$ is valid in Gödel logic if and only if it evaluates to $1$ under all restricted chains. In particular, in chain normal form it suffices to consider only restricted chains.
\end{corollary}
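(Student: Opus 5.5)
We derive the corollary from Proposition~\ref{equivalence} together with the chain machinery of the previous subsection. Fix a $\del$-free formula $A$ with variables among $\{X_1,\dots,X_n\}$.

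First we record that a valuation into $[0,1]$ and the chain it satisfies carry the same information. By the proof of the ``law of excluded middle'' proposition, every valuation $v$ induces a total preorder on $\{\bot,X_1,\dots,X_n,\top\}$, hence a unique chain $C_v$ with $v(C_v)=1$. Conversely, every chain is realised by some valuation: choose values in $[0,1]$ respecting the prescribed strict and equivalent positions. For a restricted chain $C^-$ no variable sits in the class of $\top$, so it is realised by a restricted valuation, and every restricted valuation realises a restricted chain.

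By the proposition on syntactic evaluation, $C\land A\lequiv C\land P$ for an atom $P\in\{X_1,\dots,X_n,\top,\bot\}$. Since $A$ is $\del$-free, its value depends only on the order type of the atom values. Hence $A$ takes value $1$ under every valuation realising $C$ if and only if $P$ is $\top$, or $P$ lies in the equivalence class of $\top$ within $C$. We say $A$ \emph{evaluates to $1$ under $C$} in exactly this case.

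With this dictionary the main equivalence follows. $A$ is valid in $G_{[0,1]}^-$ if and only if it takes value $1$ under every restricted valuation, which by the previous two paragraphs holds if and only if $A$ evaluates to $1$ under every restricted chain $C^-$. By Proposition~\ref{equivalence}, validity in $G_{[0,1]}^-$ is the same as validity in $G_{[0,1]}$, which gives the first claim.

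For the ``in particular'' part, we use Proposition~\ref{propchain}, which writes $A\lequiv\bigvee_k C_k$ over chains. We want to show that checking only the restricted $C_k^-$ suffices to decide validity. Concretely: if some non-restricted chain $C$ yields $P$ below $\top$, then the modified chain $C'$ from the proof of Proposition~\ref{equivalence} is restricted and yields the same $P$, still below $\top$. So a failure on any chain already shows up on a restricted one.

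The main obstacle is the step that $A$ keeps its evaluated atom when the top block is detached from $\top$ in passing from $C$ to $C'$. This needs checking connective by connective. For $\land$ and $\lor$ the result is clear, since the relative order of atoms is unchanged. For $\to$ one needs $Y_i\to Z$ with $Z$ below $Y_i$, which gives $Z$ in both $C$ and $C'$; and $Z\to Y_i$, which gives $\top$ in both. I would make this an explicit induction on $A$, with the invariant that corresponding subformula values in $C$ and $C'$ differ only by collapsing the block $\{Y_j\}$ with $\top$. This is exactly where $\del$-freeness is used.
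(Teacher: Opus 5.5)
Your proof is correct and follows the paper's argument: Proposition~\ref{equivalence} combined with the correspondence between restricted valuations and restricted chains, which you make explicit. One small correction: ``realises'' should be read as ``has the order type of'', since with $A<B$ defined as $(B\to A)\to B$ the condition $v(C)=1$ does not single out one chain when some variable has value $1$. Your closing induction on $C\to C'$ only re-checks the step left unproved inside Proposition~\ref{equivalence}, namely that collapsing an upper segment onto $1$ commutes with $\land,\lor,\to$; it is a welcome justification of that proposition but is not needed for the corollary, whose ``in particular'' clause is immediate from the first claim.
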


\begin{proof}
By the above proposition, validity in the standard semantics coincides with validity in the restricted semantics. Hence it suffices to check validity over restricted chains only.
\end{proof}

\begin{corollary}(Stability under replacement by $\top$)
Let $A$ be a propositional formula without $\del$. If $A$ is valid in the restricted semantics, then any formula obtained from $A$ by replacing propositional variables by $\top$ is also valid in the restricted semantics.
\end{corollary}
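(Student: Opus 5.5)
The plan is to pass through standard semantics, where closure under substitution is available, using Proposition~\ref{equivalence} in both directions. Let $A(X_1,\dots,X_n)$ be a $\del$-free formula that is valid in the restricted semantics. Write $A'$ for the result of replacing some of its variables, say $X_{i_1},\dots,X_{i_k}$, by $\top$. Then $A'$ is again $\del$-free, and its variables form a subset of $\{X_1,\dots,X_n\}$.

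\emph{Step 1.} Apply Proposition~\ref{equivalence} to $A$. Since $A$ contains no $\del$ and is valid in $G^-_{[0,1]}$, we obtain $G_{[0,1]}\models A$.

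\emph{Step 2.} Show that $G_{[0,1]}\models A'$ by a direct semantic argument. Let $\I$ be any standard valuation of the variables of $A'$. Extend it to $\I^*$ by setting $\I^*(X_{i_j})=1$ for $j=1,\dots,k$; in standard semantics every atom may take the value $1$, so this is a legitimate valuation. A routine induction on formulas gives $\I^*(B)=\I(B')$ for every subformula $B$ of $A$, because $\I^*(X_{i_j})=1=\I(\top)$ and the clauses for $\land,\lor,\limp,\bot$ are compositional. Hence $\I(A')=\I^*(A)=1$.

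\emph{Step 3.} Apply Proposition~\ref{equivalence} again, now in its trivial direction, to the $\del$-free formula $A'$. Validity in $G_{[0,1]}$ implies validity in $G^-_{[0,1]}$, since restricted valuations are a subset of standard ones. Alternatively, the preceding corollary on reduction to restricted chains can be cited here.

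The only real obstacle is conceptual. A direct argument inside the restricted semantics would fail, because setting an atom to $1$ is forbidden there; this is exactly where the non-substitution-closure of the restricted semantics appears in the example after Corollary~\ref{l - 6.1}. The detour through $G_{[0,1]}$ avoids this. It relies on the nontrivial direction of Proposition~\ref{equivalence}, which in turn depends on the hypothesis that $A$ is $\del$-free. One should also remark that the statement fails for formulas containing $\del$, as the example $a\lor\del(a\lor\neg a)\lequiv\neg a$ shows.
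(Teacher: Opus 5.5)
Your proposal is correct and follows the paper's route. Restricted validity lifts to standard validity by Proposition~\ref{equivalence}. In the standard semantics, assigning $1$ to a variable is the same as substituting $\top$. Standard validity then descends trivially to the restricted semantics. You simply spell out the substitution induction that the paper leaves implicit.

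One side remark needs fixing. The formula $(a\lor\del(a\lor\neg a))\lequiv\neg a$ is not actually valid in the restricted semantics: for $0<a<1$ the left side evaluates to $a$ while $\neg a$ evaluates to $0$. So it does not witness the failure for formulas with $\del$. Use $\neg\del a$ instead: it is valid in the restricted semantics but becomes $\neg\del\top$, which evaluates to $0$, after the replacement.
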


\begin{proof}
If $A$ is valid, then it is valid in the standard semantics and hence in the restricted semantics. In the standard semantics, assigning a variable the value $1$ corresponds to replacing it by $\top$. Since validity is preserved under such substitutions, the resulting formula remains valid.
\end{proof}

\begin{remark}
    As Gödel logics in restricted semantics do not allow substitution-closure, the replacement by $\top$ is not trivial.
\end{remark}

\begin{corollary}(Failure in the presence of $\del$)
The equivalence between standard and restricted semantics fails in general for formulas containing $\del$. In particular, there exist formulas $A$ with $\del$ such that $A$ is valid in the restricted semantics but not valid in the standard semantics.
\end{corollary}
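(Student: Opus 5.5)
The claim is existential, so the plan is to give one explicit witness: the formula $\neg \del X$, for a single propositional variable $X$.

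\emph{Validity in the restricted semantics.} Let $\I$ be a restricted interpretation. By definition $\I(X)<1$, so Definition~\ref{d - 1.5} gives $\I(\del X)=0$, and the semantics of negation then gives $\I(\neg\del X)=1$. Since $\I$ was arbitrary, $\neg\del X$ is valid in $G^{-}_{[0,1]}$. The same observation underlies the local axiomatization of restricted semantics by the axioms $\neg\del X_i$. Equivalently, one can apply the Reduction to Standard Semantics proposition: the formula $\neg\del X \limp \neg\del X$ is trivially valid in standard Gödel logic with $\del$.

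\emph{Non-validity in the standard semantics.} Take the standard interpretation with $\I(X)=1$. Then $\I(\del X)=1$ and $\I(\neg\del X)=0\neq 1$, so $\neg\del X$ is not valid in $G^{\del}_{[0,1]}$. Because any Gödel set $V$ contains $1$, this countermodel exists for every $V$. The failure therefore holds uniformly across Gödel logics, not only for $[0,1]$.

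For context, it is worth recording why the argument of Proposition~\ref{equivalence} breaks down here. That argument moves a top equivalence class $Y_1\lequiv\dots\lequiv Y_n\lequiv\top$ strictly below $\top$ and relies on the syntactic evaluation of $A$ being unchanged. With $\del$, the evaluation rule $C\land\del(X_i)\lequiv\bot$ for restricted chains differs from the value $\top$ that $\del$ takes on the class of $\top$, and the witness above is exactly the case where this difference is visible. A less degenerate example is $\neg\del(X\lor Y)$, which is valid in the restricted semantics but refuted by setting $X=1$.

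The only real difficulty is choosing a witness whose behaviour is transparent. Once $\neg\del X$ is chosen, both directions reduce to a one-line computation.
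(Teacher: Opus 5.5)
Your proposal is correct and follows essentially the same approach as the paper, which also uses the witness $\neg\del X$: it is valid under restricted semantics because atoms stay strictly below $1$, and it is refuted in standard semantics by assigning $X$ the value $1$. You simply write out the one-line computations, and you add a second valid witness, $\neg\del(X\lor Y)$.
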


\begin{proof}
The connective $\del$ distinguishes between value $1$ and values strictly below $1$. Hence the transformation used in the proof of the Proposition \ref{6} is no longer sound in the presence of $\del$, and the converse implication fails in general as $\neg\del X$ where $X$ is a variable is valid in the restricted semantics but not in the standard semantics.
\end{proof}


\section{First-Order Extension: Witnessed Gödel Logics}
In the first-order setting, the behavior of the $\del$-operator becomes significantly more subtle. While the propositional case admits elimination under restricted semantics, the first-order case introduces additional complications related to quantification and the structure of truth values. Thus, the argument used for the propositional case does not extend straightforward to the first-order case.
 when 1 is not isolated and not in a perfect set, however, 0 is isolated or in a perfect set, the first-order Gödel logic with $\del$ is not recursively enumerable, while the first-order logic without $\del$ is. This holds both for standard and restricted semantics. Therefore there is not even an effective validity equivalence elimination of $\del$, and obviously no valid equivalence as in the propositional case.

\begin{theorem}[c.f. \cite{Baaz-Preining:Gödel–Dummett}]
A first-order G\"odel logic \(G_V\) is recursively enumerable iff one of the following conditions is satisfied:\\
1. \(V\) is finite,\\
2. \(V\) is uncountable 
and $0$ is an isolated point,\\
3. \(V\) is uncountable, and $0$ is in the prefect subset. 
\end{theorem}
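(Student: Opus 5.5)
The plan is to prove the two directions separately, following the strategy of \cite{Baaz-Preining:Gödel–Dummett}. Throughout I would use the fact, recalled in the Preliminaries, that validity in $G_V$ depends only on the relative order and the topological type of the truth values of atoms. This reduces every question about a specific $V$ to a question about the order-topological type of $V$.

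\textbf{Positive direction.}
\begin{enumerate}[nosep]
\item If $V$ is finite, $G_V$ is a finite-valued first-order logic. It is axiomatized by $G_{[0,1]}$ together with the finite-chain axiom $\bigvee_{i<k}(A_i\limp A_{i+1})$. Equivalently, validity reduces to classical first-order validity by encoding the finitely many truth values with classical predicates. Either description gives a recursive enumeration.
\item If $V$ is uncountable, then, being closed, it contains a perfect subset $P$ (Cantor--Bendixson). Since $P$ has the order type of the Cantor set, it carries a copy of the topological behaviour of $[0,1]$.
\item Case $0\in P$: I would show that a countermodel over $[0,1]$ can be transferred into $V$ by an order-preserving map that is continuous at the relevant infima and suprema. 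Such a map keeps all suprema and infima of the values actually used, so it preserves falsification. Hence $G_V=G_{[0,1]}$. The latter is r.e.\ by the completeness of the hypersequent calculus (equivalently Horn's axiomatization with $\forall x(A(x)\lor B)\limp(\forall xA(x)\lor B)$).
\item Case $0$ isolated: the same embedding argument works, except that it must now also preserve the gap above $0$. This identifies $G_V$ with the logic $G_\uparrow$, axiomatized by $G_{[0,1]}$ plus $\forall x\neg\neg A(x)\limp\neg\neg\forall xA(x)$. That logic is r.e.\ by its completeness proof.
\end{enumerate}

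\textbf{Negative direction.} Assume $V$ is countably infinite, or uncountable with $0$ neither isolated nor in the perfect kernel. In both cases there are infinitely many values accumulating to some point, either at $0$ or inside a countable part of $V$, with no perfect set around that point. I would reduce the complement of classical validity in finite models to $G_V$-validity. By Trakhtenbrot, that complement, i.e.\ classical finite satisfiability, is not r.e., so neither is $G_V$.
\begin{enumerate}[nosep]
\item For a classical sentence $\varphi$, build a Gödel sentence $\varphi^{*}$ that is valid in $G_V$ exactly when $\varphi$ has no finite model.
\item To do this, add a fresh predicate $P(x)$ whose values form a strictly ordered sequence. Use the countable accumulation point, via formulas of the form $\exists x\,P(x)\limp\exists x\,(P(x)\limp\dots)$, to express that this sequence is infinite or finite.
\item Relativize $\varphi$ to the crisp (values $0/1$) part using $\neg\neg$, and combine the two parts.
\item In the uncountable case with $0$ not in the perfect kernel, first isolate a countable closed neighbourhood of $0$. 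Then run the same encoding there, using $\neg$ to force the relevant values into this neighbourhood.
\end{enumerate}

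\textbf{Main obstacle.} I expect the hardest step to be the transfer argument in the positive direction. One must construct the order embedding so that every infimum and supremum that occurs in a countermodel is still attained in the same way in $V$, and in particular so that the behaviour at $0$ (isolated versus limit) matches. My first approach would be to pass to a countable countermodel by Löwenheim--Skolem, so that only countably many values and limits must be respected. I would then build the embedding into $P$ by a back-and-forth construction that respects these limits.
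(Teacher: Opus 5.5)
The paper gives no proof of this theorem; it quotes the classification of Baaz, Preining and Zach. Your architecture matches the standard one: countermodel transfer for the three positive cases, and a Trakhtenbrot-type reduction for the rest. As written, however, both directions contain genuine gaps.

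\textbf{Positive direction.} You ignore the top value. $V=[0,\tfrac12]\cup\{1\}$ is uncountable with $0$ in the perfect kernel, so it falls under condition 3. But a $[0,1]$-countermodel may contain an unattained supremum equal to $1$, and no order-preserving map into $V$ can respect it. So your embedding cannot ``keep all suprema and infima''. You need a truncation step first:
\begin{itemize}[nosep]
\item Take a countable countermodel with $\mathcal{I}(A)<1$.
\item Choose $t\in(\mathcal{I}(A),1)$ outside the countable set of values of subformula instances, and send every value above $t$ to $1$.
\item This commutes with $\land,\lor,\limp,\forall,\exists$, keeps $A$ below $1$, and removes every limit at $1$.
\end{itemize}
The truncation does not commute with $\del$, which is why $1$ appears in the paper's next theorem.

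In the case ``$0$ isolated'' the target logic is $G_0$, the logic of $\{0\}\cup[\tfrac12,1]$, not $G_\uparrow$. The latter is the logic of the countable set $\{1-\tfrac1n : n\geq 1\}\cup\{1\}$, and it is not r.e.\ by the very theorem you are proving. Also, ``the same embedding argument'' cannot start from arbitrary $[0,1]$-countermodels, since these may have an unattained infimum $0$. What works is the reduction: $G_V\models A$ iff $G_{[0,1]}\models\Theta_A\limp A$. Here $\Theta_A$ is the conjunction of the closed instances of $\forall x\neg\neg B(x)\limp\neg\neg\forall x B(x)$ for the universally quantified subformulas of $A$. Truncating a countermodel of the right side between $\mathcal{I}(A)$ and $\mathcal{I}(\Theta_A)$ makes $\Theta_A$ true. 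After that, no relevant infimum $0$ is approached from above, so the positive values can be moved into the perfect part of $V$. Appealing to ``its completeness proof'' assumes exactly this. (Minor: a perfect set need not have the order type of the Cantor set, as $[0,1]$ shows. You only need that it contains one, with minimum $0$ when $0\in P$.)

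\textbf{Negative direction.} Trakhtenbrot's theorem is misquoted. Finite satisfiability, and likewise the complement of finite validity, is r.e.: just enumerate finite structures. What is not r.e.\ is the set of sentences with no finite model, equivalently (up to negation) the finitely valid sentences. Your step 1 uses the right set. The sentence before it, however, reduces an r.e.\ set, which would prove nothing.

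More seriously, the encoding itself is absent. In a $G_V$-countermodel of $\varphi^{*}$ only the $\neg\neg$-parts are crisp. A hypothesis like ``the values of $P$ form a strictly ordered sequence'' need only receive a larger value than the conclusion, so it cannot simply be imposed. The whole content of the ``only if'' direction is showing that every $G_V$-countermodel of $\varphi^{*}$ yields a finite model of $\varphi$. This is the step where countability of $V$ near the accumulation point must be used. A construction relying only on ``infinitely many values accumulate somewhere'' would apply verbatim to $V=[0,1]$ and contradict the r.e.-ness of $G_{[0,1]}$. The expression $\exists x\,P(x)\limp\exists x\,(P(x)\limp\dots)$ is a placeholder, not such a construction.

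Similarly, in step 4, $\neg$ cannot confine values to a neighbourhood $[0,b]$. It only separates $0$ from positive values, for instance forcing all $P(u)>0$ with $\inf_u P(u)=0$. The encoding must therefore be built so that only the tail of values converging to $0$ matters.
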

\begin{theorem}
    [c.f. \cite{Baaz-Preining-Zach:FirstOrderGoedel}]
A first-order G\"odel logic \(G_V^\del\) is recursively enumerable iff one of the following conditions is satisfied:\\
1. \(V\) is finite,\\
2. \(V\) is uncountable 
and $0$ is an isolated point, and $1$ is in the perfect subset,\\
3. \(V\) is uncountable, and $0$ is in the prefect subset, and $1$ is isolated point, 
\\
4. \(V\) is uncountable 
and both $0$ and $1$  are isolated points,\\
5. \(V\) is uncountable, and $0$ and $1$ is in the prefect subset. 

\end{theorem}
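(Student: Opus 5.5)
We would prove this classification by reducing to the known characterization for $G_V$ without $\del$ (the preceding theorem, after \cite{Baaz-Preining:Gödel–Dummett}), and then analyzing separately what $\del$ adds at the top value $1$. The guiding idea is that $\del$ makes the point $1$ behave exactly as $0$ behaves via $\neg$. Formally, $\del$ is the dual of $\neg$: the order-reversing map $x\mapsto 1-x$ sends $V$ to a Gödel set $V^*$ and turns $\neg$-detection at $0$ into $\del$-detection at $1$. So the topological type of $1$ in $V$ (isolated, in the perfect kernel, or neither) should matter in $G_V^\del$ exactly as the type of $0$ matters in $G_V$.

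\textbf{Positive direction (r.e.).} For finite $V$, validity is decidable via the finite-valued hypersequent or tableaux calculi with $\del$. For uncountable $V$, we use the standard method. First pass to the perfect kernel $P$ of $V$: since $V$ is closed and uncountable, $V=P\cup$ countable. Then show that validity in $G_V^\del$ coincides with validity in the logic over a canonical set determined only by the types of $0$ and $1$. In case 2 this set is $\{0\}\cup[1/2,1]$; in case 3 it is $[0,1/2]\cup\{1\}$; in case 4 it is $\{0\}\cup[1/3,2/3]\cup\{1\}$; in case 5 it is $[0,1]$. Each canonical logic is axiomatized by an r.e. system. For $[0,1]$ this is $G_{[0,1]}$ plus the $\del$-axioms $\axdi$--$\axdr$ with the quantifier interaction axioms. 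The isolated-endpoint variants are obtained by adding $\forall x\,(\neg\neg A(x))\to\neg\neg\forall x\,A(x)$-type axioms at $0$ and their $\del$-duals $\del\exists x A(x)\to\exists x\,\del A(x)$ at $1$. Completeness for each would be proved by the embedding and truth-value-compression argument of Baaz–Preining–Zach: any countermodel's finitely many relevant "cuts" can be embedded order- and type-preservingly into the canonical set, while respecting whether infima and suprema at $0$ and $1$ are attained.

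\textbf{Negative direction (not r.e.).} For countable infinite $V$, non-r.e.\ already holds without $\del$. We expect this to transfer, since $G_V\subseteq G_V^\del$ conservatively and a reduction from the complement of classical finite-satisfiability (Trakhtenbrot) works verbatim. For uncountable $V$ where $0$ is neither isolated nor in the perfect kernel, non-r.e.\ follows from the preceding theorem, again by conservativity. The new cases are uncountable $V$ with $0$ isolated or in the kernel but with a \emph{mismatched} type at $1$: for instance, $1$ is a limit point not in the perfect kernel, or $1$ is isolated while $0$ is in the kernel. In these cases we would rerun the Trakhtenbrot-style encoding at the top, dualized via $\del$. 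Using formulas $\del\exists x A(x)\land\neg\exists x\,\del A(x)$ and $\del$-strict orders $A<_\del B$, one builds a countable descending or ascending sequence of truth values converging to $1$. This forces the model to be finite exactly when a classical sentence is finitely satisfiable. Validity would then co-encode finite satisfiability, which is $\Sigma_1$-complete, so the logic is not r.e.

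\textbf{Main obstacle.} The hardest step is the mixed cases, in particular case 3 versus the excluded configuration where $0$ is in the kernel and $1$ is in the kernel but not isolated, and the precise role of the kernel at $1$. Here one must check that the $\del$-dual of the "$0$ isolated" axiom genuinely captures "$1$ isolated". One must also check that when $1$ is a limit of countably many isolated points, the encoding of finiteness through $\del$ cannot be blocked by the non-attainment of suprema. We would first attempt the reduction via the duality map $x\mapsto1-x$ on truth-value sets and on interpretations, hoping to transport the proof for $0$ directly. Where this fails because the connectives are not self-dual under the map, we would build the encoding for $1$ by hand using $\del$ and $<_\del$.
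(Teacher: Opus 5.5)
\textbf{Where the proposal breaks.} The paper gives no proof of this theorem and only cites Baaz--Preining--Zach, so there is no argument of its own to compare with. Your positive direction follows that source's architecture: the four canonical sets, the axioms $\forall x\neg\neg A(x)\to\neg\neg\forall x A(x)$ and $\del\exists x A(x)\to\exists x\del A(x)$ for an isolated $0$ and an isolated $1$ respectively, and the embedding of countermodels. So does the conservativity transfer of non-r.e.\ from $G_V$ to $G_V^\del$ for countable $V$ and for a bad $0$. The genuine gap is the negative direction at the top value, where your case analysis is wrong.

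The theorem imposes no ``matching'' between $0$ and $1$. Cases 2--5 are exactly the four combinations of ``isolated'' and ``in the perfect kernel $V^\infty$'', chosen independently for $0$ and for $1$. You nevertheless set out to prove non-r.e.\ for ``$1$ isolated while $0$ is in the kernel'', which is case 3. You also call ``$0$ and $1$ both in the kernel'' the excluded configuration, which is case 5. Both of these logics are r.e., and your own positive part assigns them $[0,1/2]\cup\{1\}$ and $[0,1]$. So the plan contradicts itself, and any encoding that succeeded in those configurations would be unsound. Of the configurations you list, only ``$1$ a limit point not in the kernel'' is genuinely non-r.e. Once $0$ is isolated or in $V^\infty$, the only case needing a new argument is this: $V$ is uncountable, $1$ is a limit point of $V$, and $V\cap[b,1]$ is countable for some $b<1$.

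\textbf{What is missing for that case.} Your sketch never says where this countability is used, and the tools you name cannot use it on their own. The formula $\del\exists x A(x)\land\neg\exists x\del A(x)$ is satisfiable whenever $1$ is not isolated. Likewise, $<_{\del}$-chains converging to $1$ exist whenever $1$ is not isolated, in particular when $1\in V^\infty$ (cases 2 and 5). The reduction from finite validity must rest on a property that separates a countable closed neighbourhood of $1$ from a perfect one. For instance, $V\cap[b,1]$ is countable and closed, hence a countable complete order, so it contains no order-copy of $\mathbb{Q}$. Consequently no densely ordered family of values below $1$ can have supremum $1$. A perfect neighbourhood of $1$ does contain such a family: the right endpoints of the gaps of a Cantor set with maximum $1$. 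Without a step of this kind, the ``dualized Trakhtenbrot encoding'' is only a hope. As you anticipate, $x\mapsto 1-x$ cannot transport the argument for $0$, because Gödel implication is not self-dual.

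\textbf{Two smaller errors.} First, for finite $V$, first-order validity is r.e.\ via complete many-valued calculi, but it is not decidable: classical validity reduces to it by making the atoms crisp. Second, the embedding argument must preserve the countably many infima and suprema of a countable countermodel, not merely ``finitely many relevant cuts''.
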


One can conclude that the restricted semantics does not behave as the standard semantics. 
 
\begin{corollary}
 If a Gödel logic with or without $\del$ is r.e. in standard semantics then it is also r.e. in the restricted semantics (but it does not mean that $\del$ is eliminable).
\end{corollary}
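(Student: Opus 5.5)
The plan is to reduce restricted validity to standard validity of a uniformly computable transform of the formula, and then use the recursive enumerability of the standard logic. This mirrors the propositional reduction proved earlier: $F$ is valid in the restricted semantics iff $(\neg\del X_1\land\dots\land\neg\del X_n)\limp F$ is valid in the standard semantics.

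\textbf{Step 1: the first-order reduction.} For a formula $F$ whose predicate symbols are $P_1,\dots,P_k$, let
\[
F^\ast := \Bigl(\bigwedge_{j=1}^k \forall \bar x\,\neg\del P_j(\bar x)\Bigr)\limp F .
\]
We claim that $F$ is valid in the restricted semantics over $V$ iff $F^\ast$ is valid in $G_V^\del$. Each conjunct $\forall\bar x\,\neg\del P_j(\bar x)$ takes only the values $0$ and $1$, because $\neg\del$ is crisp and an infimum of values in $\{0,1\}$ lies in $\{0,1\}$. It takes the value $1$ exactly when no instance of $P_j$ attains $1$. Hence the antecedent has value $1$ exactly on restricted interpretations and $0$ otherwise. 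Consequently $F^\ast$ has value $1$ on every non-restricted interpretation, and has the value of $F$ on every restricted one, which proves the claim.

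\textbf{Step 2: the $\del$ case.} The map $F\mapsto F^\ast$ is computable. If $G_V^\del$ is r.e., then the restricted logic is $\{F : F^\ast\in G_V^\del\}$, a computable preimage of an r.e. set, and is therefore r.e.

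\textbf{Step 3: the $\del$-free case.} The main obstacle is the case where only $G_V$ without $\del$ is assumed r.e., because $F^\ast$ uses $\del$. For the uncountable case where $0$ is isolated and $1$ lies in no perfect set, $G_V^\del$ is not r.e. even though $G_V$ is, so Step 2 does not apply. Two routes are available.

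\emph{First route.} Replace $\neg\del P(\bar x)$ by a $\del$-free formula that is valid on exactly the restricted interpretations, in the style of the earlier proposition that $(\neg\del X_1\land\dots)\limp F$ is validity-equivalent to $F\lor X_1\lor\dots\lor X_n$. The first-order analogue would be $F\lor\exists\bar x\,P_1(\bar x)\lor\dots\lor\exists\bar x\,P_k(\bar x)$. This analogue needs checking: if $\sup$ equals $1$ without being attained, the disjunct still becomes $1$, and this is exactly where witnessing would matter. So this route is not obviously sound for non-witnessed $V$.

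\emph{Second route.} This route looks more robust. For $\del$-free $F$, argue directly, as in Proposition~\ref{equivalence}, that restricted and standard validity coincide. Given a standard countermodel, compose every atomic value with an order-preserving map $h:V\to V$ that is continuous, preserves infima and suprema, fixes $0$, and sends $1$ to a value $c<1$ above all values used below $1$. One then argues that for $\del$-free formulas the value $1$ is preserved as $1$, while values below $1$ stay below the image of $1$.

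The delicate point of the second route is the interaction of such an $h$ with $\limp$ and with quantifier limits approaching $1$. If I cannot make this uniform, I would fall back on citing the r.e. classification: for each r.e. case, exhibit a sound and complete calculus for the restricted logic by adding the axioms $\forall\bar x\,\neg\del P(\bar x)$ locally, as described when restricted semantics was introduced.
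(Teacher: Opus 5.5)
Your Steps 1 and 2 are correct. They match the only justification the paper gives: its remark that restricted semantics is captured by the local axioms $\forall\bar x\,\neg\del P(\bar x)$. These steps settle every case in which $G_V^\del$ is r.e., including its $\del$-free fragment.

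The gap is Step 3, which you leave unresolved, and it is exactly the case that needs an argument. That case is $V$ uncountable, $0$ isolated or in the perfect kernel, and $1$ a limit point of $V$ outside the perfect kernel; there $G_V$ is r.e.\ but $G_V^\del$ is not. (Your description ``$0$ isolated and $1$ in no perfect set'' is slightly off.) Your fallback is circular: adding $\forall\bar x\,\neg\del P(\bar x)$ is Step 2 again and lands in the non-r.e.\ $G_V^\del$. Your second route also fails as formulated, for two reasons:
\begin{itemize}
\item A value $c<1$ above all atom values below $1$ need not exist, because in a first-order countermodel those values can accumulate at $1$.
\item The invariant ``value $1$ is preserved as $1$'' is false once $h(1)=c$. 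An atom of value $1$ becomes $c$, and so does $\exists x\,P(x)$ when its supremum $1$ is not attained.
\end{itemize}

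What is missing is the right, weaker invariant. Take an order embedding $e\colon V\to V$ with $e(0)=0$ and $e(1)=c<1$ that preserves all infima and suprema, and put $\I'(P(\bar u))=e(\I(P(\bar u)))$. By induction on $\del$-free $A$ one gets:
\begin{itemize}
\item $\I'(A)=e(\I(A))$ if $\I(A)<1$;
\item $\I'(A)\in\{c,1\}$ if $\I(A)=1$.
\end{itemize}
The key point for $\limp$ is that every value in the ``$1$-class'' is $\ge c>e(x)$ for all $x<1$. So if $\I(A)>\I(B)$ then $\I'(A\limp B)=\I'(B)=e(\I(B))$. Otherwise $\I'(A\limp B)$ is $1$, or possibly $c$ when $\I(A)=\I(B)=1$. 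For quantifiers:
\begin{itemize}
\item for $\forall$ with infimum below $1$, the instances of value $1$ do not affect the infimum;
\item an unattained supremum $1$ becomes $e(1)=c$ by continuity of $e$.
\end{itemize}

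Hence every standard countermodel with $\I(F)<1$ yields a restricted countermodel with $\I'(F)=e(\I(F))<1$. So for $\del$-free formulas restricted and standard validity coincide, and recursive enumerability transfers with no translation at all. Since all atoms of $\I'$ are $\le c$, this also makes your first route sound.

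For $V=[0,1]$ take $e(x)=cx$. For a general $V$ in the critical class you still have to show that such a self-embedding below some $c<1$ exists. It does: if $V\supseteq[0,r]$, scale by $r$. Otherwise fix $V$ up to a gap near $0$ (or up to $0$ if $0$ is isolated), and push the compact remainder into an interval of $V$ or a Cantor subset of the perfect kernel lying strictly below $1$. The paper gives no argument for this $\del$-free case beyond the local-axiom remark, so this step is genuinely needed.
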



\subsection{Standard Semantics: Effective Validity-equivalent Elimination of $\del$}

$\del$-operator can be eliminated by means of a translation as shown in \cite{BaazFasching2009}.
There exists an effective translation
\[
A \mapsto A^{T}
\]
from formulas containing $\del$ to formulas without $\del$ such that
\[
GW_V^{\del} \models A
\quad \text{iff} \quad
GW_V \models A^{T}.
\]
In other words, validity in witnessed Gödel logic with $\del$
can be reduced to validity in witnessed Gödel logic without $\del$.
The translation eliminating the $\del$-operator uses a structural
normalization of formulas.
Every formula $A$ is rewritten in the following form:
\[
\left(\bigwedge_{G} \del \forall x \big(F_G(x) \leftrightarrow G(x)\big)\right)
\rightarrow F_A .
\]

Here the conjunction ranges over all subformulas $G$ of $A$, and
$F_G$ are fresh predicate symbols representing these subformulas.
Thus each complex subformula is replaced by a new predicate together
with a defining equivalence.
This transformation is similar to the use of definitional
transformations in conjunctive normal form (CNF).
\begin{proposition}\label{prop 7}
The equivalence theorem holds for $G_V$ and $GW_V$ in the form
\[
\forall x\,(A(x) \leftrightarrow B(x)) \rightarrow
\big(C(A(t)) \leftrightarrow C(B(t))\big),
\]
and for $G_V^{\del}$ and $GW_V^{\del}$ in the form
\[
 \forall x \del\,(A(x) \leftrightarrow B(x)) \rightarrow
\big(C(A(t)) \leftrightarrow C(B(t))\big),
\]
where $t$ may contain bound variables.
\end{proposition}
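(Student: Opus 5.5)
Both parts will be proved semantically, by induction on the structure of the context $C$. Fix an interpretation $\I$, of either standard or witnessed type, over a Gödel set $V$. For the first form, let $a=\I(\forall x\,(A(x)\lequiv B(x)))$. Since $\I(P\lequiv Q)=1$ when $\I(P)=\I(Q)$ and $\min(\I(P),\I(Q))$ otherwise, it suffices to prove the following claim. For every context $C$ and every assignment to the free and bound variables,
\[
\I\big(C(A(t))\lequiv C(B(t))\big)\ \ge\ a .
\]
In words: if the two plugged-in formulas agree up to level $a$, then $C(A(t))$ and $C(B(t))$ also agree up to level $a$. To make "agree up to level $a$" precise, define $u\approx_a v$ for $u,v\in V$ to mean: $u=v$, or both $u\ge a$ and $v\ge a$. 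Then $\I(P\lequiv Q)\ge a$ holds exactly when $\I(P)\approx_a\I(Q)$.

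\textbf{Base case.} For $C=[\,]$, fix any values of the variables in $t$; these may be bound variables of the surrounding context. Instantiating the universal quantifier at $\I(t)$ gives $\I(A(t)\lequiv B(t))\ge \inf_u \I(A(u)\lequiv B(u))=a$. Hence $\I(A(t))\approx_a\I(B(t))$.

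\textbf{Inductive step.} We must show that every connective and quantifier preserves $\approx_a$.

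\emph{Connectives $\land,\lor$.} Here $\min$ and $\max$ are monotone, and they send pairs that are both $\ge a$ to values $\ge a$. A short case analysis shows they preserve $\approx_a$.

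\emph{Implication.} Suppose $u\approx_a u'$ and $v\approx_a v'$; we check $(u\limp v)\approx_a(u'\limp v')$. If $u\limp v\ne u'\limp v'$ and one side is below $a$, that side equals its consequent $v<a$, so $v=v'$. If in addition $u\ne u'$, then $u,u'\ge a>v$, so both implications equal $v$, a contradiction. If instead $u=u'$, the two implications are literally equal. So the claim holds.

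\emph{Quantifiers.} For $\forall,\exists$, taking $\inf$ or $\sup$ of pointwise $\approx_a$-related families gives $\approx_a$-related results. If the infimum is $<a$, the members $<a$ coincide pointwise, so both infima agree. Otherwise both infima are $\ge a$. The same argument applies to $\sup$, and in the witnessed case $\min$ and $\max$ are special cases of $\inf$ and $\sup$. This is where $t$ containing bound variables is harmless: the induction hypothesis is quantified over all assignments.

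\textbf{Logics with $\del$.} The hypothesis becomes $\forall x\,\del(A(x)\lequiv B(x))$, which takes only the values $0$ and $1$. If it has value $0$, the implication is trivially $1$. If it has value $1$, then $\I(A(u))=\I(B(u))$ for every $u$. By the same structural induction, now with literal equality in place of $\approx_a$, every context (including $\del$) gives $\I(C(A(t)))=\I(C(B(t)))$. Hence the consequent is $1$.

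\textbf{Why $\del$ is needed in the second form.} Without $\del$, the argument breaks at the $\del$ case: $\del$ does not preserve $\approx_a$. Take $u=1$ and $v=a<1$: these are $\approx_a$-related, but $\del u=1$ and $\del v=0$. This is exactly the failure exhibited in the earlier example, and it is why the hypothesis must be strengthened to the crisp $\del$ form.

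\textbf{Main obstacle.} The expected difficulty is the careful case split for $\limp$ and for $\inf$/$\sup$ under $\approx_a$. Everything else is routine.
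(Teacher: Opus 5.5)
Your argument is correct, but it takes a different route from the paper.

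\textbf{The paper's route.} The paper's (very terse) proof is a syntactic induction on $C$ inside the Hilbert calculus. In the $\del$-case it invokes $\del A\to A$, $\del A\to\del\del A$ and the rule $A\vdash\del A$ to transport the hypothesis across an occurrence of $\del$ in the context.

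\textbf{Your route.} You argue semantically. Your key observation is that $u\approx_a v$ holds iff $\min(u,a)=\min(v,a)$. So $\approx_a$ is the kernel of the truncation $x\mapsto\min(x,a)$, which commutes with $\land$, $\lor$, $\to$ (read in $[0,a]$), $\inf$ and $\sup$, but not with $\del$. Your case analyses for $\to$ and for the quantifiers are just this fact unpacked. The $\del$-form is then settled by the crispness of $\forall x\,\del(A(x)\lequiv B(x))$ together with literal equality.

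\textbf{What each buys.} The syntactic route gives derivability in the calculus; for the $\del$-free form it is essentially the intuitionistic replacement theorem. This matches how the result is used afterwards for $\operatorname{struc}(A)$, ``using the $\del$-axioms and rules''. Your route is self-contained and covers standard and witnessed interpretations at once, since witnessed models are a subclass. It also yields a sharper graded statement: the value of $C(A(t))\lequiv C(B(t))$ is at least that of $\forall x\,(A(x)\lequiv B(x))$. And it shows exactly where and why $\del$ must enter the hypothesis.

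\textbf{Two small points.}
\begin{enumerate}
\item Add the trivial base case of hole-free subcontexts (atoms, constants), where both sides coincide.
\item State that the induction hypothesis ranges over assignments to the variables bound in $C$, with $a$ held fixed. This relies on the usual convention that only variables of $t$, not other free variables of $A$ or $B$, may be captured.
\end{enumerate}
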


\begin{proof}
The proof proceeds by induction on the complexity of $C$. In the
$\del$-case, the proof uses the principles
$\del A \rightarrow A$, $\del A \rightarrow \del\del A$,
and that validity of $A$ implies validity of $\del A$.
\end{proof}

\begin{definition}
Let $A$ be a formula in the language with $\del$. The
\emph{structural normal form} $\operatorname{struc}(A)$ is obtained by
replacing stepwise all subformulas $G(\bar{x})$ by fresh predicate
symbols $F_G(\bar{x})$. Then
\[
\operatorname{struc}(A) :=
\left(\bigwedge_{G} \del \forall \bar{x}\,
\big(F_G(\bar{x}) \leftrightarrow G(\bar{x})\big)\right)
\rightarrow F_A.
\]
\end{definition}

\begin{proposition}
For all $A$,
\[
G_V^{\del} \models A \quad \text{iff} \quad
G_V^{\del} \models \operatorname{struc}(A),
\]
and similarly,
\[
GW_V^{\del} \models A \quad \text{iff} \quad
GW_V^{\del} \models \operatorname{struc}(A).
\]
\end{proposition}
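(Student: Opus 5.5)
We prove both directions semantically, the same way for $G_V^{\del}$ and $GW_V^{\del}$. Throughout, the fresh predicates $F_G$ have the same arity as the free variables $\bar x$ of $G$, and $\operatorname{struc}(A)$ has the same free variables as $A$. The two tools are the equivalence theorem in its $\del$-form (Proposition~\ref{prop 7}) and the interpretation of $\del$: the value of $\del\forall\bar x\,(F_G(\bar x)\lequiv G(\bar x))$ is $1$ if $F_G$ and $G$ agree on every tuple of domain elements, and $0$ otherwise.

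\emph{($\Leftarrow$)} Suppose $\operatorname{struc}(A)$ is valid, and let $\I$ be an interpretation of the language of $A$. We expand $\I$ to $\I'$ by interpreting each fresh symbol $F_G$ as the truth function of $G$: for every tuple $\bar u$ of domain elements, set $\I'(F_G(\bar u)) := \I(G(\bar u))$.
\begin{itemize}
\item If $G$ is built from fresh predicates of its immediate subformulas, we define these interpretations by induction on subformula depth.
\item Each defining biconditional then holds for all tuples, so every conjunct $\del\forall\bar x(\ldots)$ gets value $1$.
\item The domain and truth-value set are unchanged, so $\I'$ is again an admissible interpretation. In the witnessed case, the quantified formulas involving $F_G$ take the same values as the corresponding formulas involving $G$, so all suprema and infima are still attained.
\end{itemize}
Since $\operatorname{struc}(A)$ is valid, $\I'(F_A)=1$, and therefore $\I(A)=\I'(F_A)=1$.

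\emph{($\Rightarrow$)} Suppose $A$ is valid, and let $\I$ be any interpretation of the extended language. If some conjunct $\del\forall\bar x(F_G\lequiv G)$ has value $0$, the antecedent is $0$ and the implication has value $1$. Otherwise every conjunct has value $1$, so each $F_G$ agrees with its definition on all tuples. We then show by induction on the construction of $\operatorname{struc}$ that $\I(F_A)=\I(A)$, which equals $1$ by validity of $A$.
\begin{itemize}
\item In the induction step we replace the inner fresh predicates by the formulas they stand for, one at a time.
\item Each replacement is justified by Proposition~\ref{prop 7}: from $\forall\bar x\,\del(F_G(\bar x)\lequiv G(\bar x))$ we get $C(F_G(\bar t))\lequiv C(G(\bar t))$, even when the terms $\bar t$ contain bound variables.
\item Before applying Proposition~\ref{prop 7}, we use $\del\forall\bar x\,B\lequiv\forall\bar x\,\del B$, so that the hypothesis is in the required form.
\end{itemize}
Alternatively, this step can be done purely semantically: truth-functional agreement of $F_G$ and $G$ on all tuples propagates through every connective, through $\del$, and through $\sup$/$\inf$ (or $\max$/$\min$).

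\emph{Main obstacle.} The delicate step is the bookkeeping of "replacing stepwise". Each $G$ must be taken in its partially abstracted form, that is, with its immediate subformulas already replaced by the corresponding $F_{G'}$. The induction must then follow this nesting, so that Proposition~\ref{prop 7} is only ever applied with a context $C$ and a substituted formula whose free variables are correctly captured. A second subtlety arises in the witnessed case for ($\Leftarrow$): we must check that the expanded interpretation remains witnessed. The argument above handles this, because every formula in the expanded language has the same truth-value behaviour as a formula of the original language.
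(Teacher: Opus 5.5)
Your proof is correct and follows the paper's route. The paper's one-line argument is that replacing each $F_G(\bar x)$ by $G(\bar x)$, justified by the equivalence theorem (Proposition~\ref{prop 7}) together with the $\del$-axioms and rules, yields equivalence with $A$; you carry this out semantically and in more detail. In particular, your expansion $F_G:=G$ for ($\Leftarrow$) is the semantic counterpart of substituting $G$ for $F_G$ and obtaining $\del\forall\bar x\,(G\lequiv G)$, and you explicitly check that the expanded interpretation stays witnessed, a point the paper leaves implicit.
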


\begin{proof}
Replacing $F_G(\bar{x})$ by $G(\bar{x})$ using the equivalence theorem
yields equivalence with the original formula. The argument uses the
$\del$-axioms and rules.
\end{proof}

\begin{theorem}
There exists a translation $T$ from formulas with $\del$ to formulas
without $\del$ such that
\[
GW_V^{\del} \models A
\quad \text{iff} \quad
GW_V \models A^{T}.
\]
\end{theorem}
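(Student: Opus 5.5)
We first pass from $A$ to $\operatorname{struc}(A)$, using the preceding proposition: $GW_V^{\del}\models A$ iff $GW_V^{\del}\models \operatorname{struc}(A)$. After this step, every occurrence of $\del$ either stands in front of a definitional clause $\del\forall\bar x\,(F_G(\bar x)\lequiv G(\bar x))$, or sits inside a clause of the form $F_G(\bar x)\lequiv \del F_H(\bar x)$ with $F_H$ atomic. So $\del$ is applied only to atoms or to universally closed definitional equivalences.

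We then remove the outer $\del$'s. Because of $\del A\limp A$ and the rule $A\proves\del A$, the antecedent $\bigwedge_G\del\forall\bar x(\ldots)$ of a valid formula acts as a set of global hypotheses that must hold with value exactly $1$. The plan is to replace $\del\forall\bar x(F_G\lequiv G)$ by a $\del$-free formula that has value $1$ exactly when $\forall\bar x(F_G\lequiv G)$ does. We cannot simply drop the $\del$: an antecedent with value below $1$ could still make the implication true in a way that differs between the two forms. We therefore reformulate validity as a consequence relation. $\operatorname{struc}(A)$ is valid iff every interpretation that makes all defining equivalences true (value $1$) gives $F_A$ the value $1$. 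To express "value $1$" of a $\del$-free antecedent $D$ without $\del$, we use a fresh propositional atom (nullary predicate) $q$ and the formula $(D\limp q)\limp q$ together with a disjunction trick. Alternatively, we use the deduction-style encoding $\bigwedge_G\forall\bar x(F_G\lequiv G)\limp F_A$ combined with a fresh-atom witness. The correctness check is semantic: from a countermodel of one formula, build a countermodel of the other by rescaling truth values below the value of $D$. Gödel semantics depends only on order type, and in witnessed semantics the order-preserving rescaling keeps infima and suprema attained, so the rescaled model is still witnessed.

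Next we translate the inner clauses $F_G\lequiv\del F_H$. Semantically, they say that $F_G(\bar u)\in\{0,1\}$ and $F_G(\bar u)=1$ iff $F_H(\bar u)=1$. We would replace them by the $\del$-free conjunction of three parts: crispness $\forall\bar x(F_G\lor\neg F_G)$, the condition $\forall\bar x(F_G\limp F_H)$, and $\forall\bar x((F_H\limp F_G)\ \text{under the globally-true hypothesis})$. The last part is the delicate half, since "$F_H=1$ implies $F_G=1$" is not expressible locally without $\del$. Here witnessed semantics is essential. Consider $\forall\bar x\,F_H(\bar x)$-style sub-conditions, and in particular the existential case $\exists x\,\del A(x)\lequiv\del\exists x A(x)$. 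Both are controlled by attained extrema, so the failure of "$F_H=1\Rightarrow F_G=1$" at some tuple is witnessed by a concrete element. We then express it as $\exists\bar x\,((F_H\limp F_G)\limp \ldots)$ placed in the global consequent, using the quantifier shift principles valid in $GW_V$.

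The main obstacle is exactly this step: faithfully simulating "value equals $1$" by $\del$-free formulas. It is only possible validity-equivalently, not as logical equivalence, and it requires verifying both directions of the model transformation (soundness of the translation and the construction of countermodels). We expect to settle it by the order-type argument. A countermodel for $A^T$ with hypotheses at value $v<1$ is collapsed by mapping $[v,1]$ to $1$ while preserving the order below $v$. This map is a Gödel homomorphism that preserves witnessing, and it yields a countermodel of $\operatorname{struc}(A)$ in $GW_V^{\del}$. The converse direction is immediate by soundness of the replacements.
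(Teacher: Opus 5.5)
\textbf{Gap: the translation is never actually defined.} Your architecture is the paper's. You pass to $\operatorname{struc}(A)$ and replace each clause $F_G\lequiv\del F_H$ by crispness $\forall\bar x(\neg F_G\vee F_G)$ and by $\forall\bar x(F_G\limp F_H)$ in the antecedent. The missing half, ``$F_H=1\Rightarrow F_G=1$'', goes into the consequent as an existential violation, and countermodels come from a collapse. But the theorem asserts that a translation \emph{exists}, so the construction is the content of the proof, and you never give it. The decisive disjunct is left as $\exists\bar x((F_H\limp F_G)\limp\ldots)$, and the obvious completions fail. With $\bot$ or with $F_H$ in place of the dots, the formula already takes value $1$ when $F_G=0<F_H<1$, which is not a violation.

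The missing observation is about $\neg F_G(\bar x)\wedge F_H(\bar x)$. Given crispness, it has value $F_H(\bar x)$ when $F_G(\bar x)=0$ and value $0$ otherwise. So the paper's disjunct $\exists\bar x(\neg F_G(\bar x)\wedge F_H(\bar x))$ equals $1$ exactly when some tuple violates the $\del$-definition. This is also where witnessedness enters, in the direction you call ``immediate by soundness''. If the definitions hold exactly, every $\bar u$ with $F_G(\bar u)=0$ has $F_H(\bar u)<1$. Only because the supremum is attained is the disjunct below $1$, which then forces $F_A=1$; without attainment it could reach $1$ through a non-attained supremum.

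\textbf{The outer $\del$'s can simply be dropped.} You claim they cannot and propose an unspecified fresh-atom encoding via $(D\limp q)\limp q$ with a ``disjunction trick''. For validity-equivalence, dropping them is exactly right, which is what the paper does, and your own collapse argument proves it. Suppose the $\del$-free $A^T$ fails in a witnessed $\mathcal{I}$, with antecedent value $v$ above consequent value $c$. Take the map that fixes values below $v$ and sends $[v,1]$ to $1$; it preserves $\wedge,\vee,\limp$ and attained extrema. Since $\neg F_G\vee F_G\geq v$ forces $F_G=0$ or $F_G\geq v$, after the collapse $F_G\in\{0,1\}$. Likewise the defining equivalences and $F_G\limp F_H$ get value $1$. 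The consequent stays at $c<1$, so $F_A<1$ and no tuple violates, i.e.\ $F_G=\del F_H$ holds exactly. This gives a countermodel of $\operatorname{struc}(A)$.

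So the repair is: drop the outer $\del$'s, and take $\exists\bar x(\neg F_G\wedge F_H)$ as the violation disjunct. Your collapse argument then settles the countermodel direction, and the attained supremum settles the other.
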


\begin{proof}
The construction proceeds as follows.

Write $\operatorname{struc}(A)$ as
\[
\left(
\bigwedge_{H} \del \forall \bar{x}\,
\big(F_{4H}(\bar{x}) \leftrightarrow \del H(\bar{x})\big)
\;\wedge\;
\bigwedge_{G} \del \forall \bar{x}\,
\big(F_G(\bar{x}) \leftrightarrow G(\bar{x})\big)
\right)
\rightarrow F_A,
\]
where the first conjunction ranges over $\del$-subformulas.

The translation $A^{T}$ replaces the defining equivalences for
$\del$-subformulas by $\del$-free conditions:
\[
\left(
\bigwedge_{H}
\big(
\forall \bar{x}(\neg F_{4H}(\bar{x}) \vee F_{4H}(\bar{x}))
\;\wedge\;
\forall \bar{x}(F_{4H}(\bar{x}) \rightarrow H(\bar{x}))
\big)
\;\wedge\;
\bigwedge_{G}
\forall \bar{x}(F_G(\bar{x}) \leftrightarrow G(\bar{x}))
\right)
\rightarrow
\]

\[\rightarrow\left(
F_A \;\vee\;
\bigvee_{H} \exists \bar{x}
\big(\neg F_{4H}(\bar{x}) \wedge H(\bar{x})\big)
\right).\]
One then proves:

\begin{itemize}
\item If $GW_V \models A^{T}$, then $GW_V^{\del} \models A$ by
reintroducing $\del$ and using the $\del$-axioms.
\item Conversely, if $A^{T}$ is not valid in $GW_V$, a witnessed
countermodel can be constructed showing that $A$ is not valid in
$GW_V^{\del}$.
\end{itemize}

Thus the equivalence holds.
\end{proof}

\begin{corollary}
Thus effective validity translation is equivalent to decidability so it is trivial in propositional.    
\end{corollary}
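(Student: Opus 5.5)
The plan is to read the corollary as two claims. First, an effective validity-equivalent translation from a logic $L_1$ into a logic $L_2$ carries exactly the recursion-theoretic information of $L_1$ relative to $L_2$. Second, in the propositional case both $G^{\del}_V$ and $G_V$ are decidable, so such a translation exists for trivial reasons and carries no information about the expressive role of $\del$. I will prove the two directions separately and then specialise.

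\emph{Step 1 (translation transfers decidability).} Suppose $A\mapsto A^{T}$ is computable and $L_1\models A$ iff $L_2\models A^{T}$. If $L_2$ is decidable, then $L_1$ is decidable: compute $A^{T}$ and run the decision procedure for $L_2$. The same argument, applied to the translation of the preceding theorem, shows that r.e.\ of $GW_V$ implies r.e.\ of $GW^{\del}_V$.

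\emph{Step 2 (decidability yields a trivial translation).} Conversely, suppose $L_1$ is decidable and $L_2$ contains a fixed valid formula and a fixed non-valid one. For the $\del$-free Gödel logics take $\top$ and $\bot$. Define $A^{T}:=\top$ if $L_1\models A$, and $A^{T}:=\bot$ otherwise. This map is computable because $L_1$ is decidable, and by construction it is validity-equivalent. So whenever the source logic is decidable, the existence of an effective validity-equivalent elimination of $\del$ is automatic.

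\emph{Step 3 (the propositional case).} Here the earlier theorem applies: a formula in $n$ variables is valid in infinite-valued $G^{\del}$ iff it is valid in the $(n+2)$-valued logic. This is a finite truth-table check, so propositional $G^{\del}_V$ is decidable, and likewise $G_V$. By Step 2, effective validity-equivalent $\del$-elimination is trivial propositionally. This is why the only meaningful propositional result is the logical-equivalence elimination of Corollary~\ref{l - 6.1} under restricted semantics.

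The main obstacle is not technical but definitional: making precise what ``equivalent to decidability'' means. The relativized reading in Steps 1–2 gives a clean biconditional only when the target logic is itself decidable, as it is propositionally. In the first-order case the right statement is the r.e.\ transfer of Step 1. I would state the corollary in that relativized form to avoid overclaiming.
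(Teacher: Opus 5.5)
Your proposal is correct and matches the paper's justification. The paper gives no separate proof, but its introductory footnote makes exactly the point of your Steps 2--3: propositional validity is decidable, so a validity-equivalent translation is trivially effective, as in your map sending valid formulas to $\top$ and the rest to $\bot$. Your Step 1 is the same many-one transfer the paper uses in the first-order section to rule out effective $\del$-elimination when $G^{\del}_V$ is not r.e.\ but $G_V$ is, and restricting ``translation iff decidability'' to a decidable target is a sensible sharpening of the paper's loosely worded statement.
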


The main result of this section is that the $\del$-operator can be
eliminated syntactically via a translation. This reduces reasoning in
$GW_V^{\del}$ to reasoning in $GW_V$, which is particularly useful
for proof theory and automated reasoning. Note that this translation works also in standard semantics but fails in restricted semantics.

\subsection{
Restricted semantics: Logically-Equivalent elimination of  $\del$}

 In the witnessed setting, a full collection of quantifier-shift principles is available and axiomatized. These allow one to move all quantifiers in prenex form and eliminate occurrences of \( \del \) from the quantifier-free matrix of a formula via elementary equivalences as discussed in chapter 3.
As a result, \( \del \)-elimination and related completeness arguments remain valid in witnessed Gödel logics. 
Note that this method only works in the restricted semantics if we allow witnessed settings but fails in standard Gödel logics. 


As a consequence of Remark~\ref{remark13}, in first-order Gödel logics, $\del$ is eliminable in the restricted semantics under the assumption of witnessed interpretations, and in the standard semantics for the prenex fragment. The status of $\del$-elimination for general first-order Gödel logics, beyond the finite case, witnessed semantics, and the prenex fragment, remains open.

\section{Conclusion}

We think that in the restricted semantics all uncountable infinite-valued $G_V^-$ coincides with $G_{[0,1]}^-$ and all countable witnessed infinite-valued $G_V^-$ are not r.e. Tis means that in Gödel logic $G_f$ where 1 is not isolated and not in perfect set and 0 is in the perfect set, there is no effective validity equivalence translation between $G_\del f^-$ and $G_ f^-$.

We think that a logically-equivalent elimination of $\del$ does not hold for $G_{[0,1]}$  in the restricted semantics for the usual first-order setting.

Additionally, this framework suggests broader implications for other intermediate and modal logics, opening new possibilities for research into similar semantic restrictions and their logical impacts.

Another interesting question for further investigation could be
an embedding of the results into the algebraic framework, especially in connection with Heyting algebras.


%
%
%
%
\end{document}